\providecommand{\algorithmname}{Algorithm}
\theoremstyle{plain}
\newtheorem{prop}{\protect\propositionname}
\def\eqref#1{{\color{blue}(\ref{#1})}}
\providecommand{\propositionname}{Proposition}
\begin{document}
\title{Co-Design of Sensing, Communications, and Control for Low-Altitude Wireless Networks}
\author{ 
\IEEEauthorblockN{Haijia Jin, \IEEEmembership{Graduate Student Member, IEEE}},
\IEEEauthorblockN{Jun Wu, \IEEEmembership{Graduate Student Member, IEEE}},
\IEEEauthorblockN{Weijie Yuan, \IEEEmembership{Senior Member, IEEE}},
\IEEEauthorblockN{Fan Liu, \IEEEmembership{Senior Member, IEEE}}, \\ and \IEEEauthorblockN{Yuanhao Cui, \IEEEmembership{Member, IEEE}}

\thanks{
This work is supported in part by National Natural Science Foundation of China under Grant 62471208, in part by Guangdong Provincial Natural Science Foundation under Grant 2024A151510098, in part by Shenzhen Science and Technology Program under Grant JCYJ20240813094627037.
(\textit{Corresponding author: Weijie Yuan})

\noindent\hangindent=1.5em\hangafter=1%
         \noindent\textbullet~ H. Jin, J. Wu, W. Yuan, and Y. Cui are with the School of Automation and Intelligent  Manufacturing, Southern University of Science and Technology, Shenzhen 518055, China (e-mail: jinhj2024@mail.sustech.edu.cn, wuj2021@mail.sustech.edu.cn, yuanwj@sustech.edu.cn, cuiyh@sustech.edu.cn).%

\noindent\hangindent=1.5em\hangafter=1%
         \noindent\textbullet~ F. Liu is with the National Mobile Communications Research Laboratory,
         Southeast University, Nanjing 210096, China (e-mail: f.liu@ieee.org).%
         

}
}

\IEEEtitleabstractindextext{
\begin{abstract}
The rapid advancement of Internet of Things (IoT) services and the evolution toward the sixth generation (6G) have positioned unmanned aerial vehicles (UAVs) as critical enablers of low-altitude wireless networks (LAWNs). This work investigates the co-design of integrated sensing, communication, and control ($\mathbf{SC^{2}}$) for multi-UAV cooperative systems with finite blocklength (FBL) transmission. In particular, the UAVs continuously monitor the state of the field robots and transmit their observations to the robot controller to ensure stable control while cooperating to localize an unknown sensing target (ST). To this end, a weighted optimization problem is first formulated by jointly considering the control and localization performance in terms of the linear quadratic regulator (LQR) cost and the determinant of the Fisher information matrix (FIM), respectively. The resultant problem,  optimizing resource allocations, the UAVs' deployment positions, and multi-user scheduling, is non-convex. To circumvent this challenge, we first derive a closed-form expression of the LQR cost with respect to other variables. Subsequently, the non-convex optimization problem is decomposed into a series of sub-problems by leveraging the alternating optimization (AO) approach, in which the difference of convex functions (DC) programming and projected gradient descent (PGD) method are employed to obtain an efficient near-optimal solution. Furthermore, the convergence and computational complexity of the proposed algorithm are thoroughly analyzed. Extensive simulation results are presented to validate the effectiveness of our proposed approach compared to the benchmark schemes and reveal the trade-off between control and sensing performance.
\end{abstract}

\begin{IEEEkeywords}Low-altitude
wireless network (LAWN), unmanned aerial vehicle (UAV), finite blocklength (FBL), linear quadratic
regulator (LQR), Fisher information matrix (FIM).
\end{IEEEkeywords}
}
\maketitle
\IEEEdisplaynontitleabstractindextext

\section{Introduction}

\textcolor{black}{With the prosperity of Internet of Things (IoT) networks and the evolution toward the sixth generation (6G), low-altitude wireless networks (LAWNs) have emerged as a critical enabler for future wireless communications \cite{LA-UAV1}. Low-altitude unmanned aerial vehicles (UAVs) and electric vertical take-off and landing (eVTOL) aircraft, as core components of LAWN, have attracted significant research interest, which are uniquely suited to address the connectivity challenges in dense urban environments, fluctuating network demands, and real-time communication requirements \cite{UAV_reivew2}. By operating closer to ground users (GUs), they provide enhanced line-of-sight (LoS) communication, reduced latency, and improved adaptability to dynamic environments \cite{LAWN_UAV1}. Their mobility and rapid deployment capabilities further enable flexible network coverage, making LAWN a key technology for supporting large-scale IoT applications and next-generation wireless networks.}


Building on these strengths, low-altitude UAVs offer significant advantages in wireless communication by serving as airborne communication nodes \cite{UAV_Cui}. Existing research has primarily concentrated on optimizing UAV deployment, resource management, and user association to achieve superior communication performance. For example, the deployment of multi-UAVs has been investigated in \cite{UAV_communication} for IoT networks by jointly optimizing the mobility of the UAVs, device-UAV association, and power control. Furthermore, Zhou \emph{et al.} \cite{UAV_com_moving1} and Cai \emph{et al.} \cite{UAV_com_moving2} proposed adaptive trajectory design strategies for moving UAVs to enhance communication efficiency and security. In addition to enabling wireless communication services, sensor-equipped UAVs have been extensively utilized for various sensing tasks, including target localization, tracking, and navigation \cite{Sensing_Cui}. For instance, target localization in single-UAV sensing networks has been analyzed by incorporating both angle-of-arrival (AOA) and position errors \cite{UAV_sensing1}. Moreover, Xu \emph{et al.} \cite{UAV_sensing2} extended the path optimization for AOA-based target localization to multi-UAV scenarios. Wu \emph{et al.} \cite{UAV_sensing3} further proposed a vision-based framework to enable real-time aerial target localization and tracking within UAV sensing systems. However, the majority of existing studies consider communication and sensing as separate processes, resulting in inefficient resource utilization and restricted overall performance \cite{UAV_S&C5}. 

To tackle this challenge, the concept of integrated sensing and communication (ISAC) has emerged as a distributive technology to LAWNs, allowing UAVs to seamlessly perform communication and sensing tasks using shared wireless resources \cite{ISAC_Cui}.  Compared to dedicated sensing and communication systems, ISAC systems enhance spectral efficiency while simultaneously reducing hardware costs \cite{ISAC_Gong}. To fully harness the potential of ISAC systems, numerous studies have delved into evaluating the performance of both sensing and communication (S\&C). On the one hand, communication performance is typically measured by metrics such as communication rates, channel capacity, and bit error rate (BER) \cite{UAV_com_moving1}, \cite{UAV_com_moving2}. On the other hand, sensing performance is often quantified using estimation error, characterized by the mean squared error (MSE), which is bounded by the Cramér–Rao bound (CRB) \cite{ISAC_Cui2}. Specifically, the CRB-rate region has been introduced as a fundamental framework to characterize the trade-off between sensing and communication \cite{ISAC1}. To jointly consider the S\&C performance, beamforming design in ISAC systems has been studied, aiming to minimize the CRB for target localization while guaranteeing the Quality of Service (QoS) for communication \cite{ISAC2}. Furthermore, recent research has investigated the application of ISAC in UAV systems, focusing on optimizing UAV positions and trajectories to achieve the S\&C trade-off \cite{UAV_S&C1,UAV_S&C2}. For example, the trade-off between communication and sensing was analyzed through a weighted trajectory optimization for single-UAV systems, where the CRB was employed as the sensing performance metric \cite{UAV_S&C3}. Wu \emph{et al.} \cite{UAV_S&C4} further proposed dynamic real-time trajectories for multi-UAV systems to enable adaptive communication and sensing capabilities.

Beyond their role in integrating sensing and communication, UAVs capitalize on their inherent mobility and adaptability to play a pivotal role in advanced control applications, meeting critical challenges such as mastering six degrees of freedom (DoF), overcoming the algorithmic complexity of UAV swarm coordination, and ensuring ultra-reliable low-latency communication (URLLC) for controlling high-mobility targets \cite{UAV_control_Cui2}, \cite{UAV_control_Cui1}.
For instance, the robustness and tracking performance of quadrotor stabilization was significantly improved through an innovative approach to adjust the weighting matrices in linear quadratic regulator (LQR) and LQR-proportional integral (PI) controllers \cite{UAV_cont3}. In addition, various control strategies have been applied to address a wide range of challenges specific to rotorcraft or rotary-wing UAV systems \cite{UAV_cont4}. Although significant progress has been achieved in control strategies, investigating wireless control systems remains equally essential, which must account for not only precise control strategies but also advanced communication capabilities to guarantee reliable performance. Nair \emph{et al.} \cite{cont&comm1} explored the trade-off between communication and control, highlighting that the data throughput must exceed the intrinsic entropy rate to stabilize the noisy linear control system. Moreover, Kostina \emph{et al.} \cite{rate_cost_trade_off} derived a lower bound on the communication rate corresponding to a target LQR cost, which serves as a quantitative measure of control system performance. Taking into account the relationship between the data throughput and the LQR cost, a control-oriented optimization problem was investigated in a sensing-communication-computing-control ($\mathbf{SC^{3}}$) integrated satellite-UAV system \cite{cont&comm4}.

In addition, the ultra-stringent latency and highly reliable requirements of real-time control make it imperative to explore novel communication technologies. Short-packet communication with the finite blocklength (FBL) transmission has shown significant potential in addressing the requirements of latency-sensitive services in IoT networks \cite{FBL}, \cite{FBL11}. The importance of FBL transmission stems from its ability to meet the stringent low-latency requirements of UAV-enabled systems, where operational periods are typically segmented into multiple short time slots to optimize UAV trajectory design \cite{UAV_FBL1}.
\textcolor{black}{Compared to infinite blocklength (IBL) transmission, where the achievable rate follows Shannon capacity, FBL transmission introduces a more complex rate function that depends on signal-to-interference-plus-noise ratio (SINR), block error rate (BLER), and blocklength \cite{FBL_rATE}. This added complexity poses significant challenges for optimization, as the rate function in FBL scenarios is non-convex and highly sensitive to system parameters, making resource allocation more difficult than in IBL systems \cite{general_solulation}.}
FBL transmission has been widely studied in UAV-enabled systems to support latency-sensitive applications. In \cite{UAV_FBL3} and \cite{UAV_FBL2}, the communication secrecy and covertness were investigated for a UAV-enabled wireless system with FBL transmission. Raut \emph{et al.} \cite{UAV_FBL4} focused on a UAV-assisted nonlinear energy harvesting full-duplex network with both infinite and finite blocklength transmission, aiming to optimize reliability in IoT scenarios.

\textcolor{black}{However, existing works on UAV-enabled communication with FBL transmission, sensing, and control ignore the integration of these functionalities, especially in multi-UAV cooperative systems where co-design of sensing, communication, and control ($\mathbf{SC^{2}}$) becomes essential. As UAVs are required to simultaneously deliver downlink signals to GUs and perform sensing tasks like target localization, developing a co-design $\mathbf{SC^{2}}$ framework under FBL transmission remains a significant yet underexplored challenge. Motivated by this gap, this paper investigates a three-dimensional (3D) multi-UAV cooperative $\mathbf{SC^{2}}$ network under FBL transmission. Specifically, a 3D scenario is considered, where UAVs transmit the observation results to the field robots via ISAC signals while simultaneously performing sensing tasks by receiving reflected echoes from a sensing target (ST). For simplicity, it is assumed that reflections unrelated to the ST and the echo interference between UAVs are negligible, as the distance between multiple UAVs is relatively small compared to the distance between UAVs and the ST, leading to a minimal impact on the overall sensing performance.}
The main contributions of this work are summarized as follows.
\begin{itemize}
\item We consider the determinant of the Fisher information matrix (FIM) as the performance metric for the sensing system and derive the expression of the FIM for the 3D coordinates of the ST. In addition, the infinite horizon LQR cost is adopted as the control performance metric, and a relationship is established between the LQR cost and the communication rate under FBL transmission in a partially observed control system.
\item By integrating these metrics, this work formulates a weighted optimization problem for the multi-UAV cooperative $\mathbf{SC^{2}}$  FBL network, aiming to minimize the LQR cost and maximize the determinant of the FIM for the ST. This is accomplished by jointly optimizing UAV-robot association, power allocation, and UAV positions subject to constraints, including rate-LQR cost bounds, power budgets, collision avoidance, and flight boundaries. Moreover, a weighting factor is introduced to balance the trade-off between control and sensing performance. By adjusting the weighting factor, UAV positions design and power allocation can dynamically prioritize between control and sensing performance.
\item The optimization problem is inherently non-convex, rendering it unsolvable through direct methods. To address this challenge, we first derive the closed-form of LQR cost with respect to UAV-robot association, power allocation, and UAV positions to reduce the dimensionality of the problem and then separate it into a series of sub-problems. Leveraging the alternative optimization (AO) method, we propose an effective algorithm to solve the formulated problem by jointly employing difference-of-convex (DC) programming and projected gradient descent (PGD) techniques. The convergence and computational complexity of the proposed algorithm are further analyzed to demonstrate its validity and effectiveness.
\item Finally, simulation results show that the proposed algorithm can adaptively adjust UAV positions and power allocation strategies according to the rate-LQR cost bounds, power budget, and weighting factor. Furthermore, the proposed algorithm can achieve enhanced sensing and control performance compared to the benchmark schemes.
\end{itemize}
\begin{figure*}
\includegraphics[width=1\linewidth]{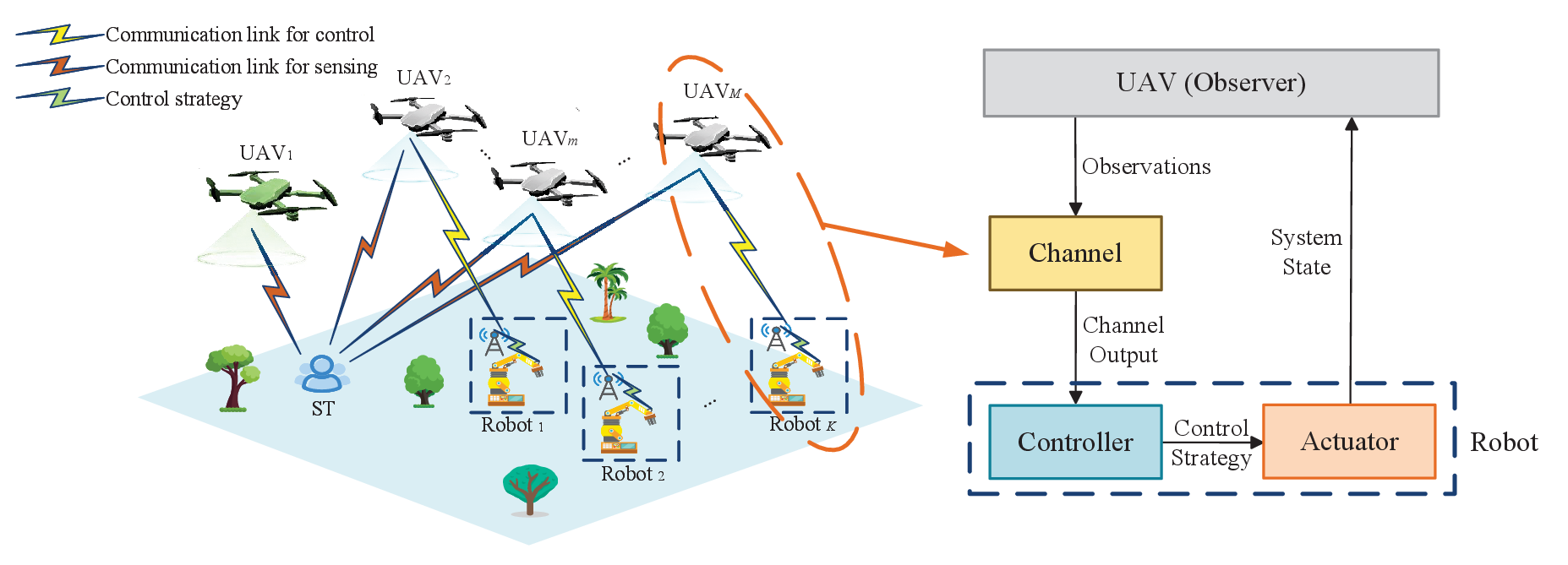}
\captionsetup{justification=raggedright, singlelinecheck=false}
\caption{The considered multi-UAV cooperative $\mathbf{SC^{2}}$ network.}
\label{fig:sys_mod}
\end{figure*}


The rest of this article is structured as follows. Section \ref{sec:2} introduces the multi-UAV cooperative $\mathbf{SC^{2}}$ network with FBL transmission. Section \ref{sec:Problem-formulation} formulates the joint optimization of the LQR cost, power allocation, UAV-robot association, and UAV positions problem, which is subsequently addressed in Section \ref{sec:Proposed-solutions}. Simulation results are provided in Section \ref{sec:4}, followed by the conclusions in Section \ref{sec:5}.

\textit{Notations: } Unless otherwise specified, bold lowercase and uppercase letters, e.g., $\mathbf{a}$ and $\mathbf{A}$, denote vectors and matrices, respectively. $\mathbb{R}^{M}$ indicates the $M$-dimensional real-valued column vectors. $\left(\cdot\right)^{T}$ and $\left(\cdot\right)^{-1}$ denote the transposition and inverse operation, respectively. $\|\cdot\|$, $\textrm{tr}\left(\cdot\right)$, $\textrm{det}\left(\cdot\right)$, and $\mathbb{E}\{\cdot\}$ are respectively the $\ell_{2}$-norm, trace, determinant, and expectation of a matrix. $\mathcal{N}\left(a,b\right)$ is used to represent a Gaussian distribution of mean $a$ and variance $b$. $\textrm{Diag}\left(\cdot\right)$ denotes a diagonal matrix, and $\mathbf{I}_{M}$ is the $M\times M$ identity matrix. The function $Q^{-1}(\cdot)$ is the inverse of Gaussian $\mathcal{Q}$ function, i.e., $Q(x)=\frac{1}{\sqrt{2\pi}}\int_{x}^{\infty}\exp(-\frac{t^{2}}{2})\textrm{d}t$.

\section{System Model \label{sec:2}}
As depicted in Fig. \ref{fig:sys_mod}, we focus on a multi-UAV cooperative $\mathbf{SC^{2}}$ network in which $M$ single-antenna UAVs monitor $K$ robots over the same frequency band channel while simultaneously localizing a single ST. The UAVs and robots are indexed by $\ensuremath{\mathcal{M}=\left\{ 1,2,\ldots,M\right\} }$ and $\ensuremath{\mathcal{K}=\left\{ 1,2,\ldots,K\right\} }$, respectively. Specifically, the UAVs send the observations to robots via ISAC signals, based on which the robots use these signals to generate optimal control strategies for their actuators, e.g., robotic arms. At the same time, the transmitted ISAC signals can also be radiated to the ST for sensing purposes.\footnote{\textcolor{black}{To better investigate the trade-off between sensing and control performance, we assume that ST does not engage in direct communication with UAVs. Future research will extend this framework by considering the ST as an active communication participant, facilitating a closed-loop $\mathbf{SC^{2}}$ system.}} Without loss of generality, we consider a 3D system, where the coordinates of the $m$-th UAV, $m\in\mathcal{M}$, is denoted as $\mathbf{q}_{m}=\left[x_{m}^{q},y_{m}^{q},z_{m}^{q}\right]^{T}\in\mathbb{R}^{3}$. The locations of the $k$-th robot, $k\in\mathcal{K}$, and the ST are represented by  $\mathbf{u}_{k}=\left[x_{k}^{u},y_{k}^{u},z_{k}^{u}\right]^{T}\in\mathbb{R}^{3}$ and $\mathbf{s}=\left[x^{s},y^{s},z^{s}\right]^{T}\in\mathbb{R}^{3}$, respectively. Furthermore, it is assumed that the channels from UAVs to the robots are dominated by the LoS link, as commonly adopted in \cite{UAV_S&C3}, \cite{UAV_S&C4}, and \cite{LoS}.

\subsection{Control Model\label{subsec:Control-Model}}

As for the control part, we model each robot  as a linear control system and formulate the discrete-time system equation of the $k$-th control system as
\begin{equation}
\mathbf{x}_{k,n+1}=\mathbf{A}_{k}\mathbf{x}_{k,n}+\mathbf{B}_{k}\mathbf{z}_{k,n}+\mathbf{v}_{k},\label{eq:system_equation}
\end{equation}
where $\mathbf{x}_{k,n}\in\mathbb{R}^{\iota}$ denotes the system state, $\mathbf{z}_{k,n}\in\mathbb{R}^{\kappa}$ denotes the control action, $\mathbf{v}_{k}\in\mathbb{R}^{\iota}$ is the control system noise with zero mean and covariance matrix $\boldsymbol{\Sigma}_{k}^{v}$, and $\mathbf{A}_{k}\in\mathbb{R}^{\iota\times\iota}$ and $\mathbf{B}_{k}\in\mathbb{R}^{\iota\times\kappa}$ are fixed state evolution matrix and input matrix, respectively, with $\iota$ and $\kappa$ being the dimensions of the system state and control action input, respectively. At time instant $n$, the $k$-th robot controller receives the observation results and generates the optimal control action $\mathbf{z}_{k,n}$ for the actuator based on the received data up to time $n$. Moreover, serving as observers, UAVs monitor the system state, such that we have the following observation equation
\begin{equation}
\mathbf{y}_{k,n}=\mathbf{C}_{k}\mathbf{x}_{k,n}+\mathbf{w}_{k},\label{eq:observation_equation}
\end{equation}
where $\mathbf{C}_{k}\in\mathbb{R}^{\zeta\times\iota}$ is a deterministic matrix with $\zeta$ representing the dimension of the observation results and $\mathbf{w}_{k}\in\mathbb{R}^{\zeta}$ denotes the observation noise with zero mean and covariance matrix $\boldsymbol{\Sigma}_{k}^{w}$. The LQR cost function is generally employed to access the control performance. 
\textcolor{black}{In particular, we consider the infinite horizon LQR cost to evaluate control performance, which is given by \cite{rate_cost_trade_off}
\begin{equation}
b_{k}\triangleq\sup\underset{N\to\infty}{\lim}\mathbb{E}\left\{ \left[\frac{1}{N}\sum\limits _{n=1}^{N}\Bigl(\mathbf{x}_{k,n}^{\mathrm{T}}\mathbf{Q}_{k}\mathbf{x}_{k,n}+\mathbf{z}_{k,n}^{\mathrm{T}}\mathbf{R}_{k}\mathbf{z}_{k,n}\Bigr)\right]\right\} ,\label{eq:LQR_cost}
\end{equation} 
where $N$ represents the time horizon over which the expected cost is averaged, and $\mathbf{Q}_{k}\succeq0$ and $\mathbf{R}_{k}\succeq0$ are weight matrices that balance the trade-off between the deviation of the system from the optimal desired state zero and the power consumption. The LQR cost function jointly takes into account system state convergence and control energy consumption. A lower $b_{k}$ indicates better system stability and less control energy consumption, leading to improved control performance. }

\subsection{Communication Model\label{subsec:Communication-Model}}
After monitoring the state of the control system, UAVs transmit the observation results to the field robots via ISAC signals. Let us define the distance between the $m$-th UAV and the $k$-th robot as
\begin{equation}
\begin{split}d_{m,k} & =\parallel\mathbf{q}_{m}-\mathbf{u}_{k}\parallel\\
 & =\sqrt{\left(x_{m}^{q}-x_{k}^{u}\right)^{2}+\left(y_{m}^{q}-y_{k}^{u}\right)^{2}+\left(z_{m}^{q}-z_{k}^{u}\right)^{2}}
\end{split}
.
\end{equation}
Considering the LoS channel, the free-space path loss model is applicable to the communication model \cite{LoS}. Thus, the channel gain between the $m$-th UAV and the
$k$-th robot can be written as
\begin{equation}
h_{m,k}=\frac{\alpha_{0}}{\left(d_{m,k}\right)^{2}},
\end{equation}
where $\alpha_{0}=\frac{G_{\textrm{T}}G_{\textrm{c}}\lambda^{2}}{\left(4\pi\right)^{2}}$ denotes the channel power at the reference distance $d_{m,k}=1\ \textrm{m}$ with $G_{\textrm{T}}$ representing the UAV transmitting antenna gain, $G_{\textrm{c}}$ the robot receiving antenna gain, and $\lambda$ the wavelength. Furthermore, we assume that each UAV is only allowed to serve at most one robot, and each robot is served by exactly one UAV. For ease of exposition, we introduce a binary matrix $\boldsymbol{\Theta}$ to characterize the associations from the UAVs to the robots. Specifically, the $\left(m,k\right)$-th entry of $\boldsymbol{\Theta}$, denoted by $\theta_{m,k}=\{0,1\}$, indicates the association status between the $m$-th UAV and the $k$-th robot, i.e., $\theta_{m,k}={1}$ represents that the $m$-th UAV serves the $k$-th robot and vice versa not. Thus, if the $k$-th robot is served by the $m$-th UAV, the SINR at the $k$-th robot can be formulated as

\begin{equation}
\Gamma_{m,k}=\frac{p_{m}h_{m,k}}{\underset{i\in\mathcal{M},i\neq m}{\sum}p_{i}h_{i,k}+\sigma_{k}^{2}},\label{SINR}
\end{equation}
where $p_{m}$ denotes the power allocated to each UAV in the network,  $\sum_{i\in\mathcal{M},i\neq m}p_{i}h_{i,k}$ is the co-channel interference, and $\sigma_{k}^{2}$ represents the noise power at the $k$-th robot. Thus, the achievable rate can be expressed as $R_{m,k}=\log\left(1+\Gamma_{m,k}\right)$. It is worth noting that the transmission delay is capable of significantly impacting the control performance. Consequently, we advocate exploiting the FBL transmission to fulfill the stringent delay requirements in this work. The communication rate for FBL transmission depends on the SINR, the BLER, and the blocklength, such that the data throughput between the $m$-th UAV and the $k$-th robot can be rewritten as \cite{FBL_rATE} 
\begin{equation}
R_{m,k}=\log\left(1+\Gamma_{m,k}\right)-\sqrt{\frac{V_{m,k}}{l_{m,k}}}Q^{-1}(\ensuremath{\epsilon}),
\end{equation}
where $l_{m,k}$ denotes the blocklength of the transmitted signal, $\epsilon$ is the BLER, and $V_{m,k}$ represents the channel dispersion, which is defined as
\begin{equation}
V_{m,k}=1-\left(1+\Gamma_{m,k}\right)^{-2}.
\end{equation}

\subsection{Sensing Model}

Let us denote the  distance between the $m$-th UAV and the ST as
\begin{equation}
\begin{split}d_{m} & =\parallel\mathbf{q}_{m}-\mathbf{s}\parallel\\
 & =\sqrt{\left(x_{m}^{q}-x^{s}\right)^{2}+\left(y_{m}^{q}-y^{s}\right)^{2}+\left(z_{m}^{q}-z^{s}\right)^{2}}
\end{split}
,
\end{equation}
which can be measured by $\frac{\tau_{m}c}{2}$, where $c$ is the speed of light and $\tau_{m}$ represents the round-trip delay of the ISAC signal arriving from the ST. Without loss of generality, we consider that the measurements are impacted by Gaussian noises, such that the measurement of $d_{m}$ is given by 
\begin{equation}
\hat{d}_{m}=d_{m}+\upsilon_{m},\label{eq:d_measure}
\end{equation}
where $\upsilon_{m}\sim\mathcal{N}\left(0,\sigma_{m}^{2}\right)$ is the additive white Gaussian noise (AWGN) with zero mean and variance $\sigma_{m}^{2}$. In general, the variance $\sigma_{m}^{2}$ is inversely proportional to the signal-to-noise ratio (SNR) at the UAV, i.e.,  \cite{sensing_noise}
\begin{equation}
\Gamma_{m}=\frac{p_{m}G_{{\textrm{p}}}h_{m}}{\sigma_{0}^{2}},\label{SNR at the UAV}
\end{equation}
where 
$G_{{\textrm{p}}}$ is the signal processing gain at the UAV, and $\sigma_{0}^{2}$ represents the noise power at the receiver. The two-way channel power gain between the UAV and the ST is given by $h_{m}\triangleq\frac{\beta_{0}}{\left(d_{m}\right)^{4}}$ with $\beta_{0}\triangleq\frac{G_{\textrm{T}}G_{\textrm{r}}\sigma_{\textrm{rcs}}\lambda^{2}}{\left(4\pi\right)^{3}}$ representing the channel power at the reference distance $d_{m}=1\ \textrm{m}$. Here, $G_{\textrm{r}}$ denotes the ST receiving antenna gain, and $\sigma_{\textrm{rcs}}$ is the Radar Cross-Section (RCS). Consequently, $\sigma_{m}^{2}$ can be specified as
\begin{equation}
\sigma_{m}^{2}=\frac{\rho\sigma_{0}^{2}\left(d_{m}\right)^{4}}{p_{m}G_{p}\beta_{0}},
\end{equation}
where $\rho$ is a constant related to the system settings \cite{UAV_S&C3}. 


\section{Problem   Formulation \label{sec:Problem-formulation}}

This work aims to minimize the sum LQR cost of the control system while maximizing the localization accuracy of the ST. Regarding control performance, \eqref{eq:LQR_cost} shows that the LQR cost is dictated by the system state $\mathbf{x}_{k,n}$ and the control input $\mathbf{z}_{k,n}$, both of which are primarily determined by the control strategy derived from the observations received by the robots. However, extensive research has already focused on detailed control strategies. Therefore, this work shifts attention to how communication capabilities influence the LQR cost in such a wireless control scenario. According to \cite{rate_cost_trade_off}, achieving a desired LQR cost $b_{k}$ requires the received data throughput at the $k$-th robot to satisfy the rate-LQR cost bound

\begin{equation}
B\underset{m\in\mathcal{M}}{\sum}\theta_{m,k}R_{m,k}\geq L_{k},\forall k\in\mathcal{K},\label{eq:rate_cost_trade-off}
\end{equation}
where $B$ represents the channel bandwidth, and $L_{k}$ is defined as 
\begin{equation}
L_{k}\triangleq g_{k}+\frac{\iota}{2}\log\left(1+\frac{\iota\left(\det\mathbf{N}_{k}\mathbf{M}_{k}\right)^{\frac{1}{\iota}}}{b_{k}-\left(b_{k}\right)_{\min}}\right),\forall k\in\mathcal{K},\label{eq:rate_cost_trade-off-1}
\end{equation}
where $g_{k}\triangleq\log\mid\det\mathbf{A}_{k}\mid$ denotes the intrinsic entropy rate, indicating the stability of the $k$-th robot control system. Meanwhile, $\left(b_{k}\right)_{\min}$ represents the minimum LQR cost attainable in the absence of communication constraints and is defined as
\begin{equation}
\left(b_{k}\right)_{\min}=\textrm{tr}\left(\boldsymbol{\Sigma}_{k}^{v}\mathbf{S}_{k}\right)+\textrm{tr}\left(\boldsymbol{\Sigma}_{k}\mathbf{S}_{k}\mathbf{A}_{k}^{T}\mathbf{M}_{k}\mathbf{A}_{k}\right).\label{bmin}
\end{equation}
The terms $\mathbf{S}_{k}$, $\mathbf{N}_{k}$, $\mathbf{M}_{k}$, and $\boldsymbol{\Sigma}_{k}$ are solutions to some algebraic Riccati equations concerning the given control parameters, i.e., $\mathbf{A}_{k}$, $\mathbf{B}_{k}$, $\mathbf{R}_{k}$, $\mathbf{Q}_{k}$, $\boldsymbol{\Sigma}_{k}^{v}$, and $\boldsymbol{\Sigma}_{k}^{w}$ \cite{rate_cost_trade_off}. Specifically, 
$\mathbf{S}_{k}$ and $\mathbf{M}_{k}$ are solutions to the following algebraic Riccati equations, i.e.,
\begin{equation}
\mathbf{S}_{k}=\mathbf{Q}_{k}+\mathbf{A}_{k}^{T}\left(\mathbf{S}_{k}-\mathbf{M}_{k}\right)\mathbf{A}_{k},
\end{equation}
\begin{equation}
\mathbf{M}_{k}=\mathbf{S}_{k}\mathbf{B}_{k}\left(\mathbf{R}_{k}+\mathbf{B}_{k}^{T}\mathbf{S}_{k}\mathbf{B}_{k}\right)^{-1}\mathbf{B}_{k}^{T}\mathbf{S}_{k},
\end{equation}
and $\boldsymbol{\Sigma}_{k}$ is derived using the Kalman filter and denoted as
\begin{equation}
\boldsymbol{\Sigma}_{k}=\mathbf{P}_{k}-\mathbf{K}_{k}\left(\mathbf{C}_{k}\mathbf{P}_{k}\mathbf{C}_{k}^{T}+\boldsymbol{\Sigma}_{k}^{w}\right)\mathbf{K}_{k}^{T},
\end{equation}
where $\mathbf{P}_{k}$ is the solution to the algebraic Riccati equation
\begin{equation}
\mathbf{P}_{k}=\mathbf{A}_{k}\mathbf{P}_{k}\mathbf{A}_{k}^{T}-\mathbf{A}_{k}\mathbf{K}_{k}\left(\mathbf{C}_{k}\mathbf{P}_{k}\mathbf{C}_{k}^{T}+\boldsymbol{\Sigma}_{k}^{w}\right)\mathbf{K}_{k}^{T}\mathbf{A}_{k}^{T}+\boldsymbol{\Sigma}_{k}^{v},
\end{equation}
and $\mathbf{K}_{k}$ denotes the Kalman filter gain, defined as
\begin{equation}
\mathbf{K}_{k}=\mathbf{P}_{k}\mathbf{C}_{k}^{T}\left(\mathbf{C}_{k}\mathbf{P}_{k}\mathbf{C}_{k}^{T}+\boldsymbol{\Sigma}_{k}^{w}\right)^{-1}.
\end{equation}
The partially observed steady-state covariance matrix $\mathbf{N}_{k}$ is given by 
\begin{equation}
\mathbf{N}_{k}=\mathbf{A}_{k}\boldsymbol{\Sigma}_{k}\mathbf{A}_{k}^{T}-\boldsymbol{\Sigma}_{k}+\boldsymbol{\Sigma}_{k}^{v}.
\end{equation}
These algebraic Riccati equations are solved iteratively using numerical methods, where the equations are updated starting from an initial estimate until convergence within a predefined tolerance is achieved. Moreover, note that the right-hand-side (RHS) of \eqref{eq:rate_cost_trade-off-1} decreases with respect to $b_{k}$. Therefore, a smaller $b_{k}$ requires a higher communication rate between the UAV and the robot, which corresponds to improved control performance.

As for localization accuracy of the ST, MSE, defined as $\mathbb{E}\left(\parallel\mathbf{s}-\hat{\mathbf{s}}\parallel^{2}\right)$, is typically adopted as the metric. However, obtaining a closed-form expression for MSE and minimizing it is both challenging and computationally complex. Instead, we utilize the determinant of the FIM of $\mathbf{s}$, which is denoted as $\det\boldsymbol{\Phi}_{\mathbf{s}}$, to assess the performance of the localization system \cite{det_FIM}. To compute $\boldsymbol{\Phi}_{\mathbf{s}}$, the FIM for the distances between the UAVs and the ST is derived first, which is denoted as  $\boldsymbol{\Phi}\left(\mathbf{d}\right)$ with $\mathbf{d}=\left[d_{1},d_{2},\ldots,d_{M}\right]^{T}$. Subsequently, by applying the chain rule, $\boldsymbol{\Phi}_{\mathbf{s}}$ is expressed as
\begin{equation}
\boldsymbol{\Phi}_{\mathbf{s}}=\mathbf{J}\left(\mathbf{d}\right)\boldsymbol{\Phi}\left(\mathbf{d}\right)\left[\mathbf{J}\left(\mathbf{d}\right)\right]^{T},\label{fai_s_D}
\end{equation}
where $\mathbf{J}\left(\mathbf{d}\right)\in\mathbb{R}^{3\times M}$ denotes the Jacobian matrix of $\mathbf{d}$ obtained at the true localization target position $\mathbf{s}=\left[x^{s},y^{s},z^{s}\right]^{T}$, which is given by
\begin{equation}
\begin{split}\mathbf{J}\left(\mathbf{d}\right) & =\frac{\partial\mathbf{d}^{T}}{\partial\mathbf{s}}\end{split}
=\left[\begin{array}{cccc}
\frac{x_{1}^{q}-x^{s}}{d_{1}} & \frac{x_{2}^{q}-x^{s}}{d_{2}} & \ldots & \frac{x_{M}^{q}-x^{s}}{d_{M}}\\
\frac{y_{1}^{q}-y^{s}}{d_{1}} & \frac{y_{2}^{q}-y^{s}}{d_{2}} & \ldots & \frac{y_{M}^{q}-y^{s}}{d_{M}}\\
\frac{z_{1}^{q}-z^{s}}{d_{1}} & \frac{z_{2}^{q}-z^{s}}{d_{2}} & \ldots & \frac{z_{M}^{q}-z^{s}}{d_{M}}
\end{array}\right].\label{eq:J_d}
\end{equation}
Moreover, for convenience, we define the vector  $\hat{\mathbf{d}}=\left[\hat{d}_{1},\hat{d}_{2},\ldots,\hat{d}_{M}\right]^{T}$ to represent the measured distances between the ST and all UAVs. Based on \eqref{eq:d_measure}, we conclude that $\hat{\mathbf{d}}\sim\mathcal{N}\left(\mathbf{d},\boldsymbol{\Lambda}^{2}\right)$, where $\boldsymbol{\Lambda}^{2}$ is a diagonal matrix constructed from the elements $\left\{ \sigma_{m}^{2}\right\} _{m=1}^{M}$, which is denoted as
\begin{equation}
\boldsymbol{\Lambda}^{2}=\frac{\rho\sigma_{0}^{2}}{G_{p}\beta_{0}}\textrm{diag}\left(\frac{\left(d_{1}\right)^{4}}{p_{1}},\frac{\left(d_{2}\right)^{4}}{p_{2}},\ldots,\frac{\left(d_{M}\right)^{4}}{p_{M}}\right).
\end{equation}
According to \cite{CRB}, $\boldsymbol{\Phi}\left(\mathbf{d}\right)$ is given by
\begin{equation}
\begin{split}\left[\boldsymbol{\Phi}\left(\mathbf{d}\right)\right]_{\vartheta,\varrho} & =\left[\frac{\partial\mathbf{d}}{\partial d_{\vartheta}}\right]^{T}\left[\boldsymbol{\Lambda}^{2}\right]^{-1}\left[\frac{\partial\mathbf{d}}{\partial d_{\varrho}}\right] \\
&+\frac{1}{2}\textrm{tr}\left[\left[\boldsymbol{\Lambda}^{2}\right]\frac{\partial\left[\boldsymbol{\Lambda}^{2}\right]}{\partial d_{\vartheta}}\left[\boldsymbol{\Lambda}^{2}\right]^{-1}\frac{\partial\left[\boldsymbol{\Lambda}^{2}\right]}{\partial d_{\varrho}}\right],\vartheta,\varrho\in\mathcal{M}.\end{split}
\label{eq:fai_d}
\end{equation}
By substituting $\mathbf{J}\left(\mathbf{d}\right)$ and $\boldsymbol{\Phi}\left(\mathbf{d}\right)$ into \eqref{fai_s_D}, $\boldsymbol{\Phi}_{\mathbf{s}}$ is written as
\begin{equation}
\boldsymbol{\Phi}_{\mathbf{s}}=\left[\begin{array}{ccc}
\Phi_{11} & \Phi_{12} & \Phi_{13}\\
\Phi_{12} & \Phi_{22} & \Phi_{23}\\
\Phi_{13} & \Phi_{23} & \Phi_{33}
\end{array}\right],
\end{equation}
where $\Phi_{11}$, $\Phi_{22}$, $\Phi_{33}$, $\Phi_{12}$, $\Phi_{13}$, and $\Phi_{23}$ are given by ~\eqref{eq:fai_a}--\eqref{eq:fai_bc}, as shown at the top of the next page.
\begin{figure*}[t]
\begin{equation}
\Phi_{11}=\underset{m\in\mathcal{M}}{\sum}\left(\frac{p_{m}G_{p}\beta_{0}}{\rho\sigma_{0}^{2}}\frac{\left(x_{m}^{q}-x^{s}\right)^{2}}{\left(d_{m}\right)^{6}}+\frac{8\left(x_{m}^{q}-x^{s}\right)^{2}}{\left(d_{m}\right)^{4}}\right),\label{eq:fai_a}
\end{equation}

\begin{equation}
\Phi_{22}=\underset{m\in\mathcal{M}}{\sum}\left(\frac{p_{m}G_{p}\beta_{0}}{\rho\sigma_{0}^{2}}\frac{\left(y_{m}^{q}-y^{s}\right)^{2}}{\left(d_{m}\right)^{6}}+\frac{8\left(y_{m}^{q}-y^{s}\right)^{2}}{\left(d_{m}\right)^{4}}\right),\label{eq:fai_b}
\end{equation}
\begin{equation}
\Phi_{33}=\underset{m\in\mathcal{M}}{\sum}\left(\frac{p_{m}G_{p}\beta_{0}}{\rho\sigma_{0}^{2}}\frac{\left(z_{m}^{q}-z^{s}\right)^{2}}{\left(d_{m}\right)^{6}}+\frac{8\left(z_{m}^{q}-z^{s}\right)^{2}}{\left(d_{m}\right)^{4}}\right),\label{eq:fai_c}
\end{equation}
\begin{equation}
\Phi_{12}=\underset{m\in\mathcal{M}}{\sum}\left(\frac{p_{m}G_{p}\beta_{0}}{\rho\sigma_{0}^{2}}\frac{\left(x_{m}^{q}-x^{s}\right)\left(y_{m}^{q}-y^{s}\right)}{\left(d_{m}\right)^{6}}+\frac{8\left(x_{m}^{q}-x^{s}\right)\left(y_{m}^{q}-y^{s}\right)}{\left(d_{m}\right)^{4}}\right),\label{eq:fai_ab}
\end{equation}
\begin{equation}
\Phi_{13}=\underset{m\in\mathcal{M}}{\sum}\left(\frac{p_{m}G_{p}\beta_{0}}{\rho\sigma_{0}^{2}}\frac{\left(x_{m}^{q}-x^{s}\right)\left(z_{m}^{q}-z^{s}\right)}{\left(d_{m}\right)^{6}}+\frac{8\left(x_{m}^{q}-x^{s}\right)\left(z_{m}^{q}-z^{s}\right)}{\left(d_{m}\right)^{4}}\right),\label{eq:fai_ac}
\end{equation}
\begin{equation}
\Phi_{23}=\underset{m\in\mathcal{M}}{\sum}\left(\frac{p_{m}G_{p}\beta_{0}}{\rho\sigma_{0}^{2}}\frac{\left(y_{m}^{q}-y^{s}\right)\left(z_{m}^{q}-z^{s}\right)}{\left(d_{m}\right)^{6}}+\frac{8\left(y_{m}^{q}-y^{s}\right)\left(z_{m}^{q}-z^{s}\right)}{\left(d_{m}\right)^{4}}\right).\label{eq:fai_bc}
\end{equation}
\rule[0.5ex]{1\textwidth}{0.4pt}
\end{figure*}

Consequently, the joint optimization of the LQR cost, power allocation, UAV-robot association, and UAV positions is formulated as follows
\begin{subequations}\label{P1}
    \begin{align}
\min_{\mathbf{b},\boldsymbol{\Theta},\mathbf{p},\left\{ \mathbf{q}_{m}\right\} } & \quad\frac{\eta}{\Psi^c}\underset{k\in\mathcal{K}}{\sum}b_{k}-\frac{\left(1-\eta\right)}{\Psi^s}\det\boldsymbol{\Phi}_{\mathbf{s}}\label{P1-1}\\
\mathrm{\textrm{s.t.}} & \quad\underset{m\in\mathcal{M}}{\sum}p_{m}\leq P_{\max},\label{P1-C1}\\
 & \quad p_{m}\geq0,\forall m\in\mathcal{M},\label{P1-C2}\\
 & \quad0\leq\underset{k\in\mathcal{K}}{\sum}\theta_{m,k}\leq1,\forall m\in\mathcal{M},\label{P1-C3}\\
 & \quad\underset{m\in\mathcal{M}}{\sum}\theta_{m,k}=1,\forall k\in\mathcal{K},\label{P1-C4}\\
 & \quad\theta_{m,k}\in\left\{ 0,1\right\} ,\forall m\in\mathcal{M},k\in\mathcal{K},\label{P1-C5}\\
 & \quad\parallel\mathbf{q}_{m}-\mathbf{q}_{r}\parallel^{2}\geq d_{\min}^{2},\forall m,r\in\mathcal{M},m\neq r,\label{P1-C6}\\
 & \quad\mathbf{q}_{m}\in\mathcal{D},\forall m\in\mathcal{M},\label{P1-C7}\\
 & \quad B\underset{m\in\mathcal{M}}{\sum}\theta_{m,k}R_{m,k}\geq L_{k},\forall k\in\mathcal{K},\label{P1-C8}
 \end{align}
\end{subequations}
where $\ensuremath{\mathbf{b}=\left[b_{1},\ldots,b_{K}\right]^{T}}$ represents the LQR cost vector of the control system. In problem \eqref{P1}, $\eta\in\left[0,1\right]$ is a weighting factor, with a larger value of $\eta$ indicating a higher priority assigned to control performance compared to localization accuracy. To ensure a fair comparison between the control and sensing performance, we normalize these two performance increments via $\Psi^c$ and $\Psi^s$, which represent the upper bound of $\sum_{k\in\mathcal{K}}b_{k}$ and $\det\boldsymbol{\Phi}_{\mathbf{s}}$, respectively. Moreover, \eqref{P1-C1} denotes the power budget constraint, where $P_{\max}$ is the maximum transmission power of the network. Constraints \eqref{P1-C3}-\eqref{P1-C5} ensure that the association between UAVs and robots adheres to the criteria specified in Section \ref{subsec:Control-Model}. Furthermore, \eqref{P1-C6} and \eqref{P1-C7} correspond to the collision avoidance and flight boundary constraints for UAVs, where $d_{\min}$ denotes the minimum allowable distance between any two UAVs and $\mathcal{D}$ is the permissible flight area for each UAV, respectively.

The formulated problem \eqref{P1} is non-convex and involves mixed binary and continuous variables, making it difficult to achieve a globally optimal solution. To tackle this challenge, we propose an efficient iterative algorithm to solve problem \eqref{P1}, which will be detailed in the next section.

\section{Proposed Solutions \label{sec:Proposed-solutions}}
In this section, an AO-based algorithm is proposed to derive a sub-optimal solution to problem \eqref{P1}. Specifically, we employ a two-step method to initially optimize the LQR cost  $\mathbf{b}$ with respect to $\boldsymbol{\Theta}$, $\mathbf{P}$, and $\left\{ \mathbf{q}_{m}\right\} $. Then, problem \eqref{P1} can be transformed into an equivalent form to optimize the UAV-robot association, power allocation, and UAV positions, which is further decomposed into three sub-problems and solved using the AO method.

To start with, the following Proposition gives the closed-form of the optimal LQR cost $\mathbf{b}$, denoted as $\mathbf{b}^{\star}$, with respect to $\boldsymbol{\Theta}$, $\mathbf{P}$, and $\left\{ \mathbf{q}_{m}\right\} $. 

\begin{prop} \label{pro1}
Given arbitrary $\boldsymbol{\Theta}$, $\mathbf{P}$, and $\left\{ \mathbf{q}_{m}\right\} $, the closed-form expression of the optimal LQR cost to problem \eqref{P1} is given by
\begin{equation}
b_{k}^{\star}=b_{k}\left(\boldsymbol{\Theta},\mathbf{p},\left\{ \mathbf{q}_{m}\right\}\right)=\frac{\iota\left(\det\mathbf{N}_{k}\mathbf{M}_{k}\right)^{\frac{1}{\iota}}}{2^{f_{k}\left(\boldsymbol{\Theta},\mathbf{p},\left\{\mathbf{q}_{m}\right\}\right)}-1}+\left(b_{k}\right)_{\min},\forall k\in\mathcal{K}\label{rct},
\end{equation}
with
\begin{equation}
f_{k}\left(\mathbf{p},\boldsymbol{\Theta},\left\{ \mathbf{q}_{m}\right\}\right)=\frac{2}{\iota}\left(B\underset{m\in\mathcal{M}}{\sum}\theta_{m,k}R_{m,k}-g_{k}\right),\forall k\in\mathcal{K}.
\end{equation}
\end{prop}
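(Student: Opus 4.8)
The plan is to exploit the separable structure of the problem with respect to $\mathbf{b}$ together with the monotone dependence of the rate-LQR cost bound on each $b_k$. First I would observe that, once $\boldsymbol{\Theta}$, $\mathbf{p}$, and $\left\{\mathbf{q}_m\right\}$ are held fixed, the sensing term $\det\boldsymbol{\Phi}_{\mathbf{s}}$ in the objective \eqref{P1-1} is a constant, since it depends only on the UAV positions and the power allocation. Consequently, minimizing \eqref{P1-1} over $\mathbf{b}$ is equivalent to minimizing $\sum_{k\in\mathcal{K}}b_k$ (the weight $\eta/\Psi^c$ being nonnegative), and because the variables $b_k$ are coupled neither to one another nor to the remaining constraints \eqref{P1-C1}--\eqref{P1-C7}, the minimization decouples into $K$ independent scalar problems, one for each $b_k$.

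Next I would isolate the only constraint that involves $b_k$, namely the rate-LQR cost bound \eqref{P1-C8}. Writing the fixed left-hand side as $\Xi_k \triangleq B\sum_{m\in\mathcal{M}}\theta_{m,k}R_{m,k}$, the constraint reads $\Xi_k \geq L_k$, with $L_k$ given by \eqref{eq:rate_cost_trade-off-1}. The key step is to show that $L_k$ is strictly decreasing in $b_k$ on the admissible range $b_k > \left(b_k\right)_{\min}$: the quantities $\det\mathbf{N}_k\mathbf{M}_k$ and $g_k$ are constants fixed by the control parameters, so as $b_k$ grows the argument of the logarithm in \eqref{eq:rate_cost_trade-off-1} decreases toward $1$, driving $L_k$ monotonically down. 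Hence $\Xi_k \geq L_k(b_k)$ is equivalent to a lower bound $b_k \geq \underline{b}_k$, where $\underline{b}_k$ is the unique root of $L_k(b_k) = \Xi_k$.

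Since each scalar subproblem minimizes $b_k$ subject to this lower bound, the optimizer is attained at the boundary, i.e.\ the constraint \eqref{P1-C8} is active: $L_k(b_k^\star) = \Xi_k$. The final step is purely algebraic: substituting the definitions of $L_k$ and of $f_k$, I would solve $g_k + \tfrac{\iota}{2}\log\!\big(1 + \iota(\det\mathbf{N}_k\mathbf{M}_k)^{1/\iota}/(b_k^\star - \left(b_k\right)_{\min})\big) = \Xi_k$ for $b_k^\star$, rearranging to $1 + \iota(\det\mathbf{N}_k\mathbf{M}_k)^{1/\iota}/(b_k^\star - \left(b_k\right)_{\min}) = 2^{f_k}$ (consistent with the base-two convention implicit in $2^{f_k}$) and inverting to recover \eqref{rct}.

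I do not expect any single step to be a serious obstacle; the argument is essentially monotonicity followed by an algebraic inversion. The point requiring the most care is arguing that \eqref{P1-C8} is active at the optimum rather than merely feasible: this holds because we minimize $\sum_{k}b_k$ under a lower bound on each $b_k$, so the smallest feasible value is always a minimizer (strictly so when $\eta>0$). A secondary technicality is the implicit feasibility requirement $f_k>0$, equivalently $\Xi_k>g_k$, which ensures $2^{f_k}-1>0$ so that \eqref{rct} is well defined and yields $b_k^\star>\left(b_k\right)_{\min}$; this is precisely the condition that the delivered rate exceeds the intrinsic entropy rate $g_k$.
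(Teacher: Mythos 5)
Your proposal is correct and follows essentially the same route as the paper: isolate the rate--LQR bound \eqref{P1-C8} as the only constraint involving $\mathbf{b}$, use the monotone decrease of $L_k$ in $b_k$ to conclude the constraint is active at the optimum, and invert algebraically to obtain \eqref{rct}. Your explicit remarks on the per-$k$ decoupling and on the feasibility condition $\Xi_k>g_k$ correspond to what the paper relegates to its ``assuredly stable scenario'' footnote.
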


\begin{proof}
Note that only constraint \eqref{P1-C8} imposes restrictions on $\mathbf{b}$ in problem \eqref{P1}. Hence, given arbitrary $\boldsymbol{\Theta}$, $\mathbf{P}$, and $\left\{ \mathbf{q}_{m}\right\} $, optimizing $\mathbf{b}$ reduces to solving the following problem
\begin{subequations}\label{P11}
    \begin{align}
\min_{\mathbf{b}} & \quad\frac{\eta}{\Psi^c}\underset{k\in\mathcal{K}}{\sum}b_{k}\label{P11-1}\\
\mathrm{\textrm{s.t.}} & \quad B\underset{m\in\mathcal{M}}{\sum}\theta_{m,k}R_{m,k}\geq L_{k},\forall k\in\mathcal{K}.\label{P11-C1}
\end{align}
\end{subequations}
It is observed that the RHS of \eqref{P11-C1} is a monotonically decreasing function of $b_{k}$. Thus, the feasible solution of $b_{k}$ is achieved by taking equality in \eqref{P11-C1} to minimize $\sum_{k\in\mathcal{K}}b_{k}$, which is equivalent to \eqref{rct}, completing the proof. \footnote{In this work, we consider an assuredly stable scenario, where the data throughput is substantially larger than the intrinsic entropy rate, i.e., $B\sum_{m\in\mathcal{M}}\theta_{m,k}R_{m,k}\gg g_{k},\forall k\in\mathcal{K}$, to guarantee a positive denominator in \eqref{rct} \cite{cont&comm4}. This implies that the control system operates well away from the instability threshold, allowing the focus to shift towards optimizing control performance rather than ensuring system stability.}
\end{proof}

Using Proposition \ref{pro1}, we derive the closed-form expression of the optimal LQR cost to problem \eqref{P1} with respect to $\boldsymbol{\Theta}$, $\mathbf{P}$, and $\left\{ \mathbf{q}_{m}\right\} $. Therefore, plugging \eqref{rct} into \eqref{P1}, the optimization problem is reformulated as
\begin{subequations} 
\label{P2}
\begin{align}
\min_{\boldsymbol{\Theta},\mathbf{p},\left\{ \mathbf{q}_{m}\right\} } & 
\frac{\eta}{\Psi^c}\underset{k\in\mathcal{K}}{\sum}b_{k}\left(\boldsymbol{\Theta},\mathbf{p},\left\{ \mathbf{q}_{m}\right\} \right)-\frac{\left(1-\eta\right)}{\Psi^s}\det\boldsymbol{\Phi}_{\mathbf{s}}\label{P2-1}\\
\mathrm{\textrm{s.t.}} & \quad\underset{m\in\mathcal{M}}{\sum}p_{m}\leq P_{\max},\label{P2-C1}\\
 & \quad p_{m}\geq0,\forall m\in\mathcal{M},\label{P2-C2}\\
 & \quad0\leq\underset{k\in\mathcal{K}}{\sum}\theta_{m,k}\leq1,\forall m\in\mathcal{M},\label{P2-C3}\\
 & \quad\underset{m\in\mathcal{M}}{\sum}\theta_{m,k}=1,\forall k\in\mathcal{K},\label{P2-C4}\\
 & \quad\theta_{m,k}\in\left\{ 0,1\right\} ,\forall m\in\mathcal{M},k\in\mathcal{K},\label{P2-C5}\\
 & \quad\parallel\mathbf{q}_{m}-\mathbf{q}_{r}\parallel^{2}\geq d_{\min}^{2},\forall m,r\in\mathcal{M},m\neq r,\label{P2-C6}\\
 & \quad\mathbf{q}_{m}\in\mathcal{D},\forall m\in\mathcal{M}.\label{P2-C7}
\end{align}
\end{subequations}
Although problem \eqref{P2} is easier to handle, it is still challenging to solve due to its complex objective function and the coupling of mixed binary and continuous variables. We then decouple problem \eqref{P2} into three sub-problems, i.e., UAV-robot association, power allocation, and UAV position design.

\subsection{UAV-Robot Association}

In this subsection, we focus on optimizing the UAV-robot association with the given power allocation and UAV positions. The corresponding optimization problem is formulated as
\begin{subequations}\label{P3}
    \begin{align}
\min_{\boldsymbol{\Theta}} & \quad\varphi\left(\boldsymbol{\Theta}\right)\label{P3-1}\\
\mathrm{\textrm{s.t.}} & \quad0\leq\underset{k\in\mathcal{K}}{\sum}\theta_{m,k}\leq1,\forall m\in\mathcal{M},\label{P3-C1}\\
 & \underset{m\in\mathcal{M}}{\sum}\theta_{m,k}=1,\forall k\in\mathcal{K},\label{P3-C2}\\
 & \quad\theta_{m,k}\in\left\{ 0,1\right\} ,\forall m\in\mathcal{M},k\in\mathcal{K},\label{P3-C3}
\end{align}
\end{subequations}
where
\begin{equation}
\varphi\left(\boldsymbol{\Theta}\right)=
\frac{\eta}{\Psi^c}\underset{k\in\mathcal{K}}{\sum}b_{k}\left(\boldsymbol{\Theta}\right)-\frac{\left(1-\eta\right)}{\Psi^s}\det\boldsymbol{\Phi}_{\mathbf{s}}.
\end{equation}

To solve the UAV-robot association problem, the concave-convex property of the function $b_{k}\left(\boldsymbol{\Theta}\right)$ is analyzed first. For convenience, the expression is rewritten as
\begin{equation}
b_{k}\left(\boldsymbol{\Theta}\right)=\frac{\Omega_{k}}{2^{f_{k}\left(\boldsymbol{\Theta}\right)}-1}+\left(b_{k}\right)_{\min},
\end{equation}
with $\Omega_{k}=\iota\left(\det\mathbf{N}_{k}\mathbf{M}_{k}\right)^{\frac{1}{\iota}}$.

\begin{prop}
\label{thm:Function--is}Function $b_{k}\left(\boldsymbol{\Theta}\right)$
is convex with respect to $\boldsymbol{\Theta}$ under the assuredly stable scenario.
\end{prop}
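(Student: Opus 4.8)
The plan is to reduce the claim to a one-dimensional convexity statement via the affine-composition rule. First I would observe that, with $\mathbf{p}$ and $\left\{\mathbf{q}_{m}\right\}$ held fixed, the rate $R_{m,k}$ appearing in the definition of $f_{k}$ does not depend on the association matrix $\boldsymbol{\Theta}$; consequently $f_{k}(\boldsymbol{\Theta})=\frac{2}{\iota}\bigl(B\sum_{m\in\mathcal{M}}\theta_{m,k}R_{m,k}-g_{k}\bigr)$ is an \emph{affine} function of the entries $\left\{\theta_{m,k}\right\}$ (treating the relaxed $\theta_{m,k}\in[0,1]$ as continuous variables for the purpose of the convexity analysis). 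Since precomposition with an affine map preserves convexity, and since the additive constant $\left(b_{k}\right)_{\min}$ is irrelevant, it suffices to show that the scalar outer function $\phi(t)\triangleq\Omega_{k}/(2^{t}-1)$ is convex on its relevant domain.

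Next I would carry out the scalar analysis by direct differentiation. Writing $u=2^{t}$ and $a=\ln 2$, one obtains $\phi'(t)=-\Omega_{k}a u/(u-1)^{2}$ and, after collecting terms over the common denominator, $\phi''(t)=\Omega_{k}a^{2}u(u+1)/(u-1)^{3}$. The remaining task is purely to determine the sign of this expression. Because $\Omega_{k}=\iota\left(\det\mathbf{N}_{k}\mathbf{M}_{k}\right)^{1/\iota}>0$ (the matrices $\mathbf{N}_{k}$ and $\mathbf{M}_{k}$ being positive definite by construction), and because $a^{2}>0$ and $u>0$ always hold, the sign of $\phi''$ is governed entirely by the factor $(u+1)/(u-1)^{3}$.

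The key step, and the point at which the \emph{assuredly stable} hypothesis enters, is to guarantee that $u-1>0$, i.e. $t>0$. Under that scenario the excerpt assumes $B\sum_{m\in\mathcal{M}}\theta_{m,k}R_{m,k}\gg g_{k}$, so that $f_{k}(\boldsymbol{\Theta})>0$ and hence $u=2^{f_{k}}>1$; this forces $(u-1)^{3}>0$ and $u+1>0$, whence $\phi''>0$ and $\phi$ is (strictly) convex on $t>0$. Composing with the affine $f_{k}$ then delivers convexity of $b_{k}(\boldsymbol{\Theta})$, which completes the argument. I expect the main obstacle to be bookkeeping rather than conceptual: one must verify the second-derivative computation carefully, since the cancellation that collapses the bracket into the clean $(u+1)$ factor is easy to mishandle, and one must be explicit that without the stability assumption the denominator $2^{t}-1$ could vanish or change sign, destroying both the convexity and the very well-posedness of $b_{k}$.
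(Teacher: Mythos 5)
Your proposal is correct and follows essentially the same route as the paper: both reduce the claim to the sign of the second derivative of the scalar map $t\mapsto\Omega_{k}/(2^{t}-1)$ composed with the affine $f_{k}$, and both invoke the assuredly stable condition precisely to guarantee $2^{f_{k}}>1$ so that the $(2^{f_{k}}-1)^{3}$ denominator is positive. Your second-derivative expression $\Omega_{k}(\ln 2)^{2}\,2^{t}(2^{t}+1)/(2^{t}-1)^{3}$ is in fact the correct form of the factor that the paper's displayed formula renders (apparently with a typo) as $2f_{k}(\boldsymbol{\Theta})(2^{f_{k}}+1)/(2^{f_{k}}-1)^{3}$; either way the sign analysis and the conclusion coincide.
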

\begin{proof}
The first order derivation of function $b_{k}\left(\boldsymbol{\Theta}\right)$ is given by
\begin{equation}
b_{k}^{\prime}\left(\boldsymbol{\Theta}\right)=-\Omega_{k}\ln2\frac{2^{f_{k}\left(\boldsymbol{\Theta}\right)}f_{k}^{\prime}\left(\boldsymbol{\Theta}\right)}{\left(2^{f_{k}\left(\boldsymbol{\Theta}\right)}-1\right)^{2}},
\end{equation}
where $f_{k}^{\prime}\left(\boldsymbol{\Theta}\right)=\frac{2}{\iota}B\sum_{m\in\mathcal{M}}R_{m,k}>0.$ Subsequently, the second order derivation of function $b_{k}\left(\boldsymbol{\Theta}\right)$ is calculated as
\begin{equation}
b_{k}^{\prime\prime}\left(\boldsymbol{\Theta}\right)=2\Omega_{k}\left(\ln2\right)^{2}\left[f_{k}^{\prime}\left(\boldsymbol{\Theta}\right)\right]^{2}f_{k}\left(\boldsymbol{\Theta}\right)\frac{2^{f_{k}\left(\boldsymbol{\Theta}\right)}+1}{\left(2^{f_{k}\left(\boldsymbol{\Theta}\right)}-1\right)^{3}}.\label{sod}
\end{equation}
It follows that $b_{k}^{\prime\prime}\left(\boldsymbol{\Theta}\right)>0$ holds as long as $2^{f_{k}\left(\boldsymbol{\Theta}\right)}>1$, which is equivalent to $B\sum_{m\in\mathcal{M}}\theta_{m,k}R_{m,k}\gg g_{k}$. This condition is always satisfied under the assuredly stable scheme. Therefore, the function $b_{k}\left(\boldsymbol{\Theta}\right)$ is convex under the assuredly stable scenario, completing the proof.
\end{proof}

Proposition \ref{thm:Function--is} indicates the convexity of the objective function of problem \eqref{P3}. However, it is still challenging to tackle directly due to the constraint \eqref{P3-C3}. By converting
$\theta_{m,k}$ into an equivalent continuous form, problem \eqref{P3} is equivalently transformed into
\begin{subequations}\label{P4}
    \begin{align}
\min_{\boldsymbol{\Theta}} & \quad\frac{\eta}{\Psi^c}\underset{k\in\mathcal{K}}{\sum}b_{k}\left(\boldsymbol{\Theta}\right)\label{P4-1}\\
\mathrm{s.t.} & \quad0\leq\theta_{m,k}\leq1,\forall m\in\mathcal{M},k\in\mathcal{K},\label{P4-C1}\\
 & \quad\underset{m\in\mathcal{M}}{\sum}\underset{k\in\mathcal{K}}{\sum}\left(\theta_{m,k}-\theta_{m,k}^{2}\right)\leq0,\label{P4-C2}\\
 & \quad\eqref{P3-C1},\eqref{P3-C2},\label{P4-C3}
\end{align}
\end{subequations}
where constraints \eqref{P4-C1} and \eqref{P4-C2} ensure that the value of $\theta_{m,k}$ is restricted to either $0$ or $1$, which is equivalent to constraint \eqref{P3-C3}. Although \eqref{P4-C2} remains non-convex, it has been expressed as the difference of two convex functions, making it solvable via DC programming \cite{DC}. To approximate \eqref{P4-C2}, we employ the first-order Taylor series expansion of $\sum_{m\in\mathcal{M}}\sum_{k\in\mathcal{K}}\theta_{m,k}^{2}$
as
\begin{equation}
\underset{m\in\mathcal{M}}{\sum}\underset{k\in\mathcal{K}}{\sum}\left(\theta_{m,k}^{\left(t_{1}\right)}\right)^{2}+2\underset{m\in\mathcal{M}}{\sum}\underset{k\in\mathcal{K}}{\sum}\theta_{m,k}^{\left(t_{1}\right)}\left(\theta_{m,k}-\theta_{m,k}^{\left(t_{1}\right)}\right)\leq0,\label{SCA-theta}
\end{equation}
where $t_{1}$ denotes the $t_{1}$-th iteration of the DC programming. 
\textcolor{black}{Since the left-hand-side (LHS) of \eqref{SCA-theta} cannot be smaller than $0$, finding a solution directly is challenging. To circumvent this difficulty, we then incorporate \eqref{SCA-theta} into the objective function using the penalty technique, further transforming problem \eqref{P4} into\begin{subequations}\label{P5}
    \begin{align}
\min_{\boldsymbol{\Theta}} & \quad\frac{\eta}{\Psi^c}\underset{k\in\mathcal{K}}{\sum}b_{k}\left(\boldsymbol{\Theta}\right)+\mu\underset{m\in\mathcal{M}}{\sum}\underset{k\in\mathcal{K}}{\sum}\theta_{m,k}+\phi\left(\boldsymbol{\Theta}\right),\label{P5-1}\\
\mathrm{s.t.} & \quad\eqref{P3-C1},\eqref{P3-C2},\eqref{P4-C1},\label{P5-C1}
\end{align}
\end{subequations}
where 
\begin{equation}
\phi\left(\boldsymbol{\Theta}\right)=\mu\underset{m\in\mathcal{M}}{\sum}\underset{k\in\mathcal{K}}{\sum}\left(\left(\theta_{m,k}^{\left(t_{1}\right)}\right)^{2}+2\theta_{m,k}^{\left(t_{1}\right)}\left(\theta_{m,k}-\theta_{m,k}^{\left(t_{1}\right)}\right)\right),
\end{equation}
and $\mu$ is a non-negative parameter used to penalize the objective function. }

The detailed steps of the penalty-DC algorithm are outlined in Algorithm \ref{alg:1}. First, the feasible starting point $\boldsymbol{\Theta}^{\left(t_{1}\right)}$, the initial penalty parameter $\mu^{\left(t_{1}\right)}$, and $\omega^{\left(t_{1}\right)}$, which represents the value of the objective function of problem \eqref{P5}, are initialized. Next, problem \eqref{P5} is solved via convex optimization tools, such as CVX \cite{CVX}. Moreover, it is worth noting that the penalty parameter $\mu$ can be initialized with a small value and increased gradually by multiplying it with a positive constant $A$ until problem \eqref{P5} converges. i.e., $\mid\left(\omega^{\left(t_{1}\right)}-\omega^{\left(t_{1}-1\right)}\right)/\omega^{\left(t_{1}-1\right)}\mid<\varepsilon$, with $\varepsilon$ denoting the maximum permissible error.
\begin{algorithm}[t]
\caption{Penalty-DC Optimization Algorithm for UAV-Robot Association\label{alg:1}}

1: \textbf{Initialize} $t_{1}\leftarrow0$, $\boldsymbol{\Theta}^{\left(t_{1}\right)}$
, $\mu^{\left(t_{1}\right)}$, $\omega^{\left(t_{1}\right)}$, maximum number  

$\quad\qquad\qquad$ of iterations $t_{1,\max}$, tolerance $\varepsilon$, and initial  

$\quad\qquad\qquad$ penalty parameter $\mu_{\max}$.

2: \textbf{repeat}

3: $\qquad$$t_{1}\leftarrow t_{1}+1$.

4: $\qquad$Solve problem \eqref{P5} to obtain $\boldsymbol{\Theta}^{\left(t_{1}\right)}$and
$\omega^{\left(t_{1}\right)}$.

5: $\qquad$Update $\mu^{\left(t_{1}\right)}\leftarrow\min\left\{ A\mu^{\left(t_{1}-1\right)},\mu_{\max}\right\} $.

6: \textbf{until} $\mid\frac{\omega^{\left(t_{1}\right)}-\omega^{\left(t_{1}-1\right)}}{\omega^{\left(t_{1}-1\right)}}\mid<\varepsilon$
or $t_{1}>t_{1,\max}$.

7: \textbf{Output} $\boldsymbol{\Theta}^{\star}\leftarrow\boldsymbol{\Theta}^{\left(t_{1}\right)}$.
\end{algorithm}

\subsection{Power Allocation}

In this section, we optimize the power allocation under fixed UAV positions and the updated UAV-robot association. The corresponding power allocation problem is given by
\begin{subequations}\label{P7}
    \begin{align}
\min_{\mathbf{p}} & \quad\varphi\left(\mathbf{p}\right)\label{P7-1}\\
\mathrm{s.t.} & \quad\underset{m\in\mathcal{M}}{\sum}p_{m}\leq P_{\max},\label{P7-C1}\\
 & \quad p_{m}\geq0,\forall m\in\mathcal{M},\label{P7-C2}
\end{align}
\end{subequations}
where
\begin{equation}
\varphi\left(\mathbf{p}\right)=\frac{\eta}{\Psi^c}\underset{k\in\mathcal{K}}{\sum}b_{k}\left(\mathbf{p}\right)-\frac{\left(1-\eta\right)}{\Psi^s}\det\boldsymbol{\Phi}_{\mathbf{s}}\left(\mathbf{p}\right).
\end{equation}
Note that problem \eqref{P7} is a linearly constrained optimization problem with a non-convex objective function. Thus, it can be optimally solved through the PGD method \cite{PGD}. The pseudocode of the PGD-based algorithm is given in Algorithm \ref{alg:2-1}. In Algorithm \ref{alg:2-1}, the process begins with initializing a feasible starting point, assuming equal power allocation among all UAVs, along with an initial step size $\rho$ and a convergence tolerance $\varepsilon$. At each iteration, the search direction is determined by computing the gradient of $\varphi\left(\mathbf{p}\right)$ with respect to $\mathbf{p}$. Then, $\mathbf{p}$ is updated through a projection onto the feasible set $\mathcal{P}$, i.e., $\Psi_{\mathcal{P}}\left(\mathbf{p}^{\prime}-\rho\triangle\right):\mathbb{R}^{M}\rightarrow\mathcal{P}$, which is obtained by solving the following convex problem
\begin{subequations} \label{P8}
    \begin{align}
\min_{\mathbf{p}} & \quad\parallel\mathbf{p}^{\prime}-\rho\triangle-\mathbf{p}\parallel\label{P8-1}\\
\mathrm{s.t.} & \quad\eqref{P7-C1},\eqref{P7-C2}.\label{P8-C1}
\end{align}
\end{subequations}
Furthermore, the step size $\rho$ is gradually reduced by multiplying it with $\frac{1}{1+\hat{\rho}}$, until $\parallel\mathbf{p}-\mathbf{p}^{\prime}\parallel<\varepsilon$, where $\hat{\rho}$ is a small positive constant.

\begin{algorithm}[t]
\caption{PGD-Based Algorithm for Power Allocation\label{alg:2-1}}

1: \textbf{Initialize} A starting feasible point $\mathbf{p}\leftarrow\left[\frac{P_{\max}}{M},\ldots,\frac{P_{\max}}{M}\right]$, 

$\quad\qquad\qquad$ initial step size $\rho$, and tolerance $\varepsilon$.

2: \textbf{repeat}

3: $\qquad$Save the previous direction vector $\mathbf{p}^{\prime}\leftarrow\mathbf{p}$.

4: $\qquad$Determine a search direction $\triangle\triangleq\nabla\varphi_{\mathbf{p}}$.

5: $\qquad$Choose a step size $\rho\leftarrow\frac{\rho}{\parallel\nabla\varphi_{\mathbf{p}}\parallel}$.

6: $\qquad$Update and $\mathbf{p}\leftarrow\Psi_{\mathcal{P}}\left(\mathbf{p}^{\prime}-\rho\triangle\right)$.

7: \textbf{until} $\parallel\mathbf{p}-\mathbf{p}^{\prime}\parallel<\varepsilon$

8: \textbf{Output} $\mathbf{p}^{\star}\leftarrow\mathbf{p}$.
\end{algorithm}

\subsection{UAV Positions Optimization and Overall Algorithm}

In this section, we focus on optimizing the UAV positions based on the optimized UAV-robot association and power allocation. The corresponding optimization problem is formulated as
\begin{subequations}\label{P9}
    \begin{align}
\min_{\left\{ \mathbf{q}_{m}\right\} } & \quad\varphi\left(\left\{ \mathbf{q}_{m}\right\} \right)\label{P9-1}\\
\mathrm{\textrm{s.t.}} & \quad\parallel\mathbf{q}_{m}-\mathbf{q}_{r}\parallel^{2}\geq d_{\min}^{2},\forall m,r\in\mathcal{M},m\neq r,\label{P9-C1}\\
 & \quad\mathbf{q}_{m}\in\mathcal{D},\forall m\in\mathcal{M},\label{P9-C2}
\end{align}
\end{subequations}
with
\begin{equation}
\varphi\left(\left\{ \mathbf{q}_{m}\right\} \right)=\frac{\eta}{\Psi^c}\underset{k\in\mathcal{K}}{\sum}b_{k}\left(\left\{ \mathbf{q}_{m}\right\} \right)-\frac{\left(1-\eta\right)}{\Psi^s}\det\boldsymbol{\Phi}_{\mathbf{s}}\left(\left\{ \mathbf{q}_{m}\right\}\right).
\end{equation}
Problem \eqref{P9} is difficult to solve due to the non-convex objective function and constraint \eqref{P9-C1}. Therefore, the successive convex approximation (SCA) methodology is employed to approximate the non-convex terms via first-order Taylor series expansion \cite{SCA_CC}. Since constraint \eqref{P9-C1} is convex with respect to both $\mathbf{q}_{m}$ and $\mathbf{q}_{r}$, it is lower bounded by \cite{dmin_convex}
\begin{equation}
\begin{split}\parallel\mathbf{q}_{m}-\mathbf{q}_{r}\parallel^{2}\geq & 2\left(\mathbf{q}_{m}^{\left(t_{2}\right)}-\mathbf{q}_{r}^{\left(t_{2}\right)}\right)^{T}\left(\mathbf{q}_{m}-\mathbf{q}_{r}\right)\\
 & -\parallel\mathbf{q}_{m}^{\left(t_{2}\right)}-\mathbf{q}_{r}^{\left(t_{2}\right)}\parallel^{2},\forall m\neq r,
\end{split}
\label{SCA-dmin}
\end{equation}
with $\mathbf{q}_{m}^{\left(t_{2}\right)}$ and $\mathbf{q}_{r}^{\left(t_{2}\right)}$ denoting the approximation of the $m$-th and $j$-th UAV's position at iteration $t_{2}$. Next, we address the non-convex objective function. It is worth noting that the function $\det\boldsymbol{\Phi}\left(\mathbf{s}\right)$ involves the 3D coordinates $x_{m}^{q}$, $y_{m}^{q}$, and $z_{m}^{q}$, which are coupled with each other, thereby complicating the optimization of UAV positions significantly. To simplify the optimization, we reformulate ~\eqref{eq:fai_a}--\eqref{eq:fai_bc} with respect to $\mathbf{q}_{m}$ instead of the individual coordinates $x_{m}^{q}$, $y_{m}^{q}$, and $z_{m}^{q}$. The reformulated expressions are presented in \eqref{eq:fai}, as shown at the top of the next page, with $\boldsymbol{\xi}_{1}=\left[1,0,0\right]^{T}$, $\boldsymbol{\xi}_{2}=\left[0,1,0\right]^{T}$, and $\boldsymbol{\xi}_{3}=\left[0,0,1\right]^{T}$, respectively.
\begin{figure*}[tp]
\begin{equation}
\Phi_{ij}=\underset{m\in\mathcal{M}}{\sum}\left(\frac{P_{m}G_{p}\beta_{0}}{\rho\sigma_{0}^{2}}\frac{\left(\boldsymbol{\xi}_{i}^{T}\left[\mathbf{q}_{m}-\mathbf{s}\right]\right)\left(\boldsymbol{\xi}_{j}^{T}\left[\mathbf{q}_{m}-\mathbf{s}\right]\right)}{\left(d_{m}\right)^{6}}+\frac{8\left(\boldsymbol{\xi}_{i}^{T}\left[\mathbf{q}_{m}-\mathbf{s}\right]\right)\left(\boldsymbol{\xi}_{j}^{T}\left[\mathbf{q}_{m}-\mathbf{s}\right]\right)}{\left(d_{m}\right)^{4}}\right),i,j=1,2,3.\label{eq:fai}
\end{equation}

\rule[0.5ex]{1\textwidth}{0.4pt}
\end{figure*}
Subsequently, $\varphi\left(\left\{ \mathbf{q}_{m}\right\} \right)$ is approximated as \eqref{SCA-detJ} at the given feasible point $\mathbf{q}_{m}^{\left(t_{2}\right)}$ using the first-order Taylor series expansion.
\begin{equation}
    \psi=\varphi\left(\left\{ \mathbf{q}_{m}^{\left(t_{2}\right)}\right\} \right)+\stackrel[m=1]{M}{\sum}\nabla\varphi_{\mathbf{q}_{m}}\left[ \mathbf{q}_{1}^{\left(t_{2}\right)},...,\mathbf{q}_{M}^{\left(t_{2}\right)}\right]^{T}\left(\mathbf{q}_{m}-\mathbf{q}_{m}^{\left(t_{2}\right)}\right).\label{SCA-detJ}
\end{equation}
As a result, the problem \eqref{P9} is approximate to solving a set of convex optimization problems
\begin{subequations}\label{P12}
    \begin{align} 
\min_{\left\{ \mathbf{q}_{m}\right\} } & \quad\psi\label{P12-1}\\
\mathrm{\textrm{s.t.}} & \quad2\left(\mathbf{q}_{m}^{\left(t_{2}\right)}-\mathbf{q}_{r}^{\left(t_{2}\right)}\right)^{T}\left(\mathbf{q}_{m}-\mathbf{q}_{r}\right)\nonumber \\
 & \quad-\parallel\mathbf{q}_{m}^{\left(t_{2}\right)}-\mathbf{q}_{r}^{\left(t_{2}\right)}\parallel^{2}\geq d_{\min}^{2},\forall m,r\in\mathcal{M},m\neq r,\label{P12-C1}\\
 & \quad\eqref{P9-C2},\label{P12-C2}
\end{align}
\end{subequations}
which can be solved using standard convex optimization tools such as CVX \cite{CVX}.

\begin{algorithm}[t]
\caption{AO-Based Algorithm for Solving Original Problem \eqref{P1} \label{alg:3}}

1: \textbf{Initialize} $t\leftarrow0$, $\mathbf{p}^{\left(t\right)}$,
$\left\{ \mathbf{q}_{m}^{\left(t\right)}\right\} $, $\delta^{\left(t\right)}$,
tolerance $\varepsilon$, and

$\quad\qquad\qquad$  maximum number of iterations $t_{\max}$,

2: \textbf{repeat}

3: $\qquad$$t\leftarrow t+1$.

4: $\qquad$Solve problem \eqref{P3} using Algorithm \ref{alg:1}
with given

$\quad\qquad$ $\mathbf{p}^{\left(t-1\right)}$ and $\left\{ \mathbf{q}_{m}^{\left(t-1\right)}\right\} $
to obtain $\boldsymbol{\Theta}^{\left(t\right)}$.

5: $\qquad$Solve problem \eqref{P7} using Algorithm \ref{alg:2-1},
$\boldsymbol{\Theta}^{\left(t\right)}$, and 

$\quad\qquad$ $\left\{ \mathbf{q}_{m}^{\left(t-1\right)}\right\} $
to obtain $\mathbf{p}^{\left(t\right)}.$

6: $\qquad$Solve problem \eqref{P9} using SCA, $\boldsymbol{\Theta}^{\left(t\right)}$, and $\mathbf{p}^{\left(t\right)}$ to  

$\quad\qquad$  obtain $\left\{ \mathbf{q}_{m}^{\left(t\right)}\right\} $.

7: \textbf{until} $\mid\frac{\delta^{\left(t\right)}-\delta^{\left(t-1\right)}}{\delta^{\left(t-1\right)}}\mid<\varepsilon$
or $t>t_{\max}$.

8: \textbf{Output }$\boldsymbol{\Theta}^{\star}\leftarrow\boldsymbol{\Theta}^{\left(t\right)}$,
$\mathbf{p}^{\star}\leftarrow\mathbf{p}^{\left(t\right)}$, and
 $\left\{ \mathbf{q}_{m}^{\star}\right\} \leftarrow\left\{ \mathbf{q}_{m}^{\left(t\right)}\right\} $.
\end{algorithm}

Finally, an AO-based algorithm is proposed to alternately optimize the UAV-robot association, power allocation, and UAV positions. The detailed steps are summarized in Algorithm \ref{alg:3}. In Algorithm \ref{alg:3}, we first initialize the $\mathbf{p}^{\left(t\right)}$ and $\left\{ \mathbf{q}_{m}^{\left(t\right)}\right\} $ that satisfy constraints \eqref{P2-C1}, \eqref{P2-C2}, \eqref{P2-C6}, and \eqref{P2-C7}, the maximum number of iterations $t_{\max}$, and the value of the objective function $\delta^{\left(t\right)}$ of problem \eqref{P2}. Next, problems \eqref{P3}, \eqref{P7}, and \eqref{P9} are solved alternately with other optimized variables until convergence. Since the optimal solutions of problems \eqref{P3}, \eqref{P7}, and \eqref{P9} only provide a near-optimal solution to problem \eqref{P2}, analyzing the convergence of the AO-based algorithm is essential. 
\textcolor{black}{For simplicity, we define $\delta\left(\mathbf{p},\boldsymbol{\Theta},\left\{ \mathbf{q}_{m}\right\} \right)$, $\delta_{1}\left(\mathbf{p},\boldsymbol{\Theta},\left\{ \mathbf{q}_{m}\right\} \right)$, $\delta_{2}\left(\mathbf{p},\boldsymbol{\Theta},\left\{ \mathbf{q}_{m}\right\} \right)$, and $\delta_{3}\left(\mathbf{p},\boldsymbol{\Theta},\left\{ \mathbf{q}_{m}\right\} \right)$ as the objective function of problems \eqref{P2}, \eqref{P3}, \eqref{P7}, and \eqref{P9}, respectively. In the fourth step of Algorithm \ref{alg:3}, the optimal $\boldsymbol{\Theta}^{\left(t\right)}$ is obtained by the given $\mathbf{p}^{\left(t-1\right)}$ and $\left\{ \mathbf{q}_{m}^{\left(t-1\right)}\right\} $. Thus, we have
\begin{equation}
\begin{split}\delta\left(\mathbf{p}^{\left(t-1\right)},\boldsymbol{\Theta}^{\left(t-1\right)},\left\{ \mathbf{q}_{m}^{\left(t-1\right)}\right\} \right) & =\delta_{1}\left(\mathbf{p}^{\left(t-1\right)},\boldsymbol{\Theta}^{\left(t-1\right)},\left\{ \mathbf{q}_{m}^{\left(t-1\right)}\right\} \right)\\
 & \geq\delta_{1}\left(\mathbf{p}^{\left(t-1\right)},\boldsymbol{\Theta}^{\left(t\right)},\left\{ \mathbf{q}_{m}^{\left(t-1\right)}\right\} \right).
\end{split}
\label{C1}
\end{equation}
Next, in the fifth step, given $\left\{ \mathbf{q}_{m}^{\left(t-1\right)}\right\} $ and the updated $\boldsymbol{\Theta}^{\left(t\right)}$, it holds that
\begin{equation}
\begin{split}\delta_{1}\left(\mathbf{p}^{\left(t-1\right)},\boldsymbol{\Theta}^{\left(t\right)},\left\{ \mathbf{q}_{m}^{\left(t-1\right)}\right\} \right) & \geq\delta_{2}\left(\mathbf{p}^{\left(t\right)},\boldsymbol{\Theta}^{\left(t\right)},\left\{ \mathbf{q}_{m}^{\left(t-1\right)}\right\} \right).\end{split}
\label{C2}
\end{equation}
Similarly, following the sixth step, we obtain the $\left\{ \mathbf{q}_{m}^{\left(t\right)}\right\} $ with optimized $\boldsymbol{\Theta}^{\left(t\right)}$and $\mathbf{p}^{\left(t\right)}$, which implies
\begin{equation}
\begin{split}
\delta_{2}\left(\mathbf{p}^{\left(t\right)},\boldsymbol{\Theta}^{\left(t\right)},\left\{ \mathbf{q}_{m}^{\left(t-1\right)}\right\} \right) & \geq\delta_{3}\left(\mathbf{p}^{\left(t\right)},\boldsymbol{\Theta}^{\left(t\right)},\left\{ \mathbf{q}_{m}^{\left(t\right)}\right\} \right)\\
 & =\delta\left(\mathbf{p}^{\left(t\right)},\boldsymbol{\Theta}^{\left(t\right)},\left\{ \mathbf{q}_{m}^{\left(t\right)}\right\} \right),\label{C3}
\end{split}
\end{equation}
Consequently, it follows that
\begin{equation}
\begin{split}\delta\left(\mathbf{p}^{\left(t-1\right)},\boldsymbol{\Theta}^{\left(t-1\right)},\left\{ \mathbf{q}_{m}^{\left(t-1\right)}\right\} \right) & \geq\delta\left(\mathbf{p}^{\left(t\right)},\boldsymbol{\Theta}^{\left(t\right)},\left\{ \mathbf{q}_{m}^{\left(t\right)}\right\} \right)\end{split}
,\label{C2-1}
\end{equation}
which indicates that the objective value of problem \eqref{P2} decreases with each iteration. Furthermore, considering the objective function of \eqref{P2} is lower bounded by a finite value due to the minimum LQR cost $\left(b_{k}\right)_{\min}$ and the upper bound $\Psi^s$ on sensing accuracy, the convergence of the proposed AO-based algorithm is guaranteed.}

Next, we analyze the computational complexity of Algorithm \ref{alg:3}. The computational complexity of the algorithm \ref{alg:1} for UAV-robot association primarily arises from problem \eqref{P5} at each iteration, which is solved via the standard interior-point method. Given the iteration number $T_{1}$, the complexity of Algorithm \ref{alg:1} is $C_{1}=\mathcal{O}\left(T_{1}\left(MK\right)^{3.5}\right)$ \cite{SCA_CC}. The computational complexity of algorithm \ref{alg:2-1} lies in the gradient calculation of $\varphi\left(\mathbf{p}\right)$ with respect to $\mathbf{p}$ at each iteration, resulting in the computational complexity of Algorithm \ref{alg:2-1} being $C_{2}=\mathcal{O}\left(\log\left(1/\varepsilon^{2}\right)KM^{2}\right)$ \cite{PGD}. The UAV positions are optimized using the standard interior-point method and SCA approach, whose computational complexity is $C_{3}=\mathcal{O}\left(T_{3}\left(3M\right)^{3.5}\right)$ with $T_{3}$ representing the iteration number of SCA. Consequently, the total computational complexity of Algorithm \ref{alg:3} is $C_{tot}=\mathcal{O}\left(T\left(C_{1}+C_{2}+C_{3}\right)\right)$, with $T$ denoting the required iteration numbers.

\section{Simulation Results\label{sec:4}}
\begin{figure}
\centering
\includegraphics[width=0.9\linewidth]{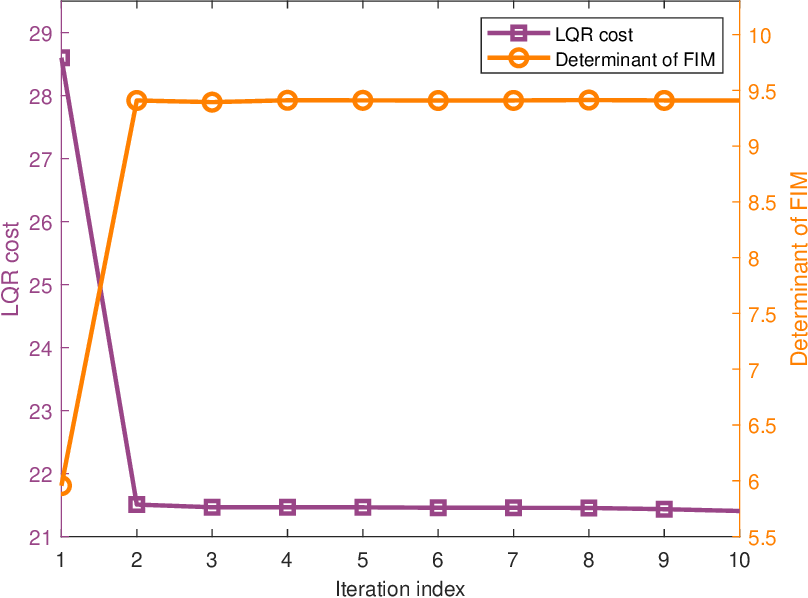}
\caption{\label{fig:Convergence}Convergence of the proposed algorithm,
with $\eta=0.5$, $P_{\textrm{max}}=-1\ \textrm{dBW}$, $\boldsymbol{\Sigma}_{k}^{w}=0.001\times\mathbf{I}_{\zeta}$, and $l_{m,k}=1024\ \textrm{\textrm{bit}}$.}
\end{figure}
In this section, we present the simulation results derived from the proposed algorithm, which simultaneously optimizes control and sensing performance in a multi-UAV cooperative network with FBL transmission. In the simulation, we assume that $4$ UAVs fly in a permissible flight area $\mathcal{D}$, defined as a rectangular region with dimensions $100\times100\ \textrm{m}^{2}$ and a fixed height $z_{m}^{q}=100\ \textrm{m},\forall m\in\mathcal{M}$. The minimum collision avoidance distance is set to $d_{\min}=25\ \textrm{m}$. Furthermore, we set $\alpha_{0}=-49\ $dB and $\beta_{0}=-50\ $dB, respectively, and the noise power at the receiver is set to $\sigma_{k}^{2}=\sigma_{0}^{2}=-110\ \textrm{dBm},\forall k \in\mathcal{K}$. The bandwidth of the shared frequency channel is set as $B=500\ \textrm{kHz}$, while other parameters are set as $G_{p}=0.1\times B$ and $\rho=200$ \cite{UAV_S&C3,UAV_S&C4}. Regarding control parameters, the intrinsic entropy rate $g_{k}$ is randomly selected from $g_{k}\in\left[0,50\right],\forall k\in\mathcal{K}$. The dimensions of the system state and observation result are both set to $25$, i.e., $\iota=\zeta=25$. The covariance matrix of the system noise is defined as $\boldsymbol{\Sigma}_{k}^{v}=0.001\times\mathbf{I}_{\iota}$. The control input matrix and observation matrix are both identity matrices, and the weight matrices of LQR cost are set as $\mathbf{Q}_{k}=\mathbf{I}_{\iota}$ and $\mathbf{R}_{k}=\mathbf{0}$ \cite{cont&comm4}. 
\textcolor{black}{Moreover, the normalization parameters, $\Psi^c$ and $\Psi^s$, are set to $30$ and $12$, respectively.} 
Throughout the simulations, control performance, sensing performance, and UAV positions are analyzed under variations in the power budget, observation noise variance, blocklength of the command signals, and weighting factor. 

\textcolor{black}{Before evaluating the system performance, we introduce several benchmark schemes for comparison. \emph{(a)} \textbf{Equal power allocation}: This scheme distributes the available power budget $P_{\max}$ evenly among all $M$ UAVs, without considering the specific control or sensing requirements of individual UAVs. \textit{(b)} \textbf{Random UAV positioning}: In this scheme, UAV deployment positions are randomly generated while ensuring compliance with collision avoidance constraints and flight boundaries. \textit{(c)} \textbf{Water-filling method}: This approach dynamically allocates power to UAVs based on their channel conditions, prioritizing UAVs with stronger communication links while maintaining an adaptive power distribution strategy \cite{water_filling}. \textit{(d)} \textbf{Sensing-only scheme}: This scheme focuses solely on optimizing sensing performance without considering control objectives.}

\begin{figure*}
    \centering
    \begin{subfigure}{0.3\linewidth}
        \centering
        \includegraphics[width=\linewidth]{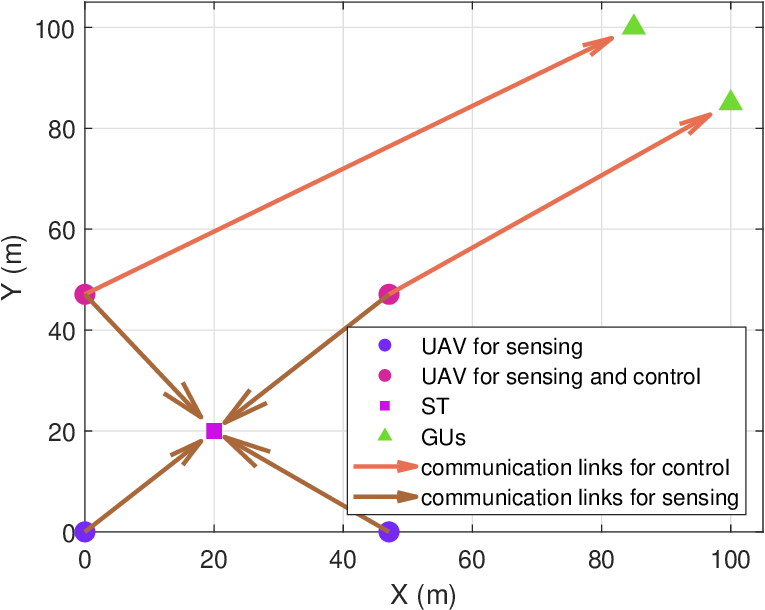}
        \caption{ \label{UPa}} 
    \end{subfigure}
    \hfill
    \begin{subfigure}{0.3\linewidth}
        \centering
        \includegraphics[width=\linewidth]{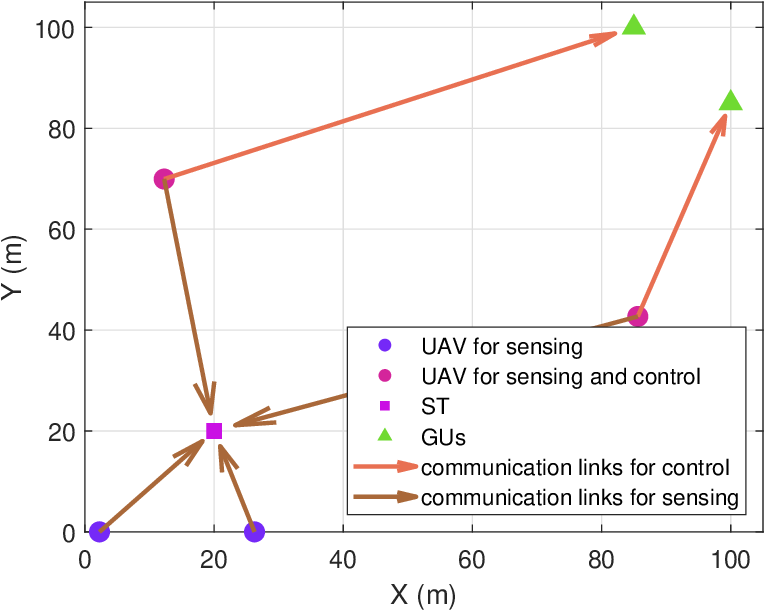}
        \caption{\label{UPb}} 
    \end{subfigure}
    \hfill
    \begin{subfigure}{0.3\linewidth}
        \centering
        \includegraphics[width=\linewidth]{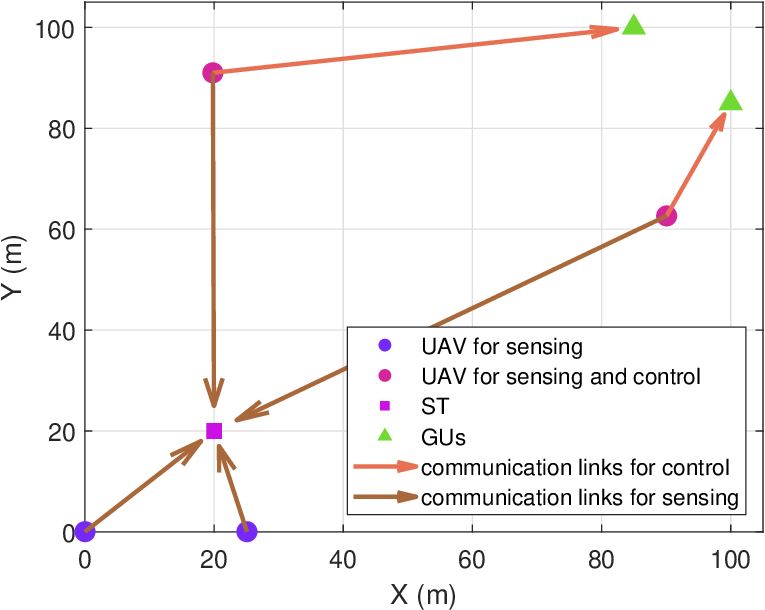}
        \caption{\label{UPc}} 
    \end{subfigure}
\caption{\label{fig:UAV-positions}UAV positions and UAV-robot association with
different weighting factors with $P_{\textrm{max}}=-1\ \textrm{dBW}$, $\boldsymbol{\Sigma}_{k}^{w}=0.001\times\mathbf{I}_{\zeta}$, and $l_{m,k}=1024\ \textrm{\textrm{bit}}$. (a) $\eta=0.1$. (b) $\eta=0.5$. (c) $\eta=0.9$.}

\end{figure*}

\begin{figure}
\centering
\includegraphics[width=0.9\linewidth]{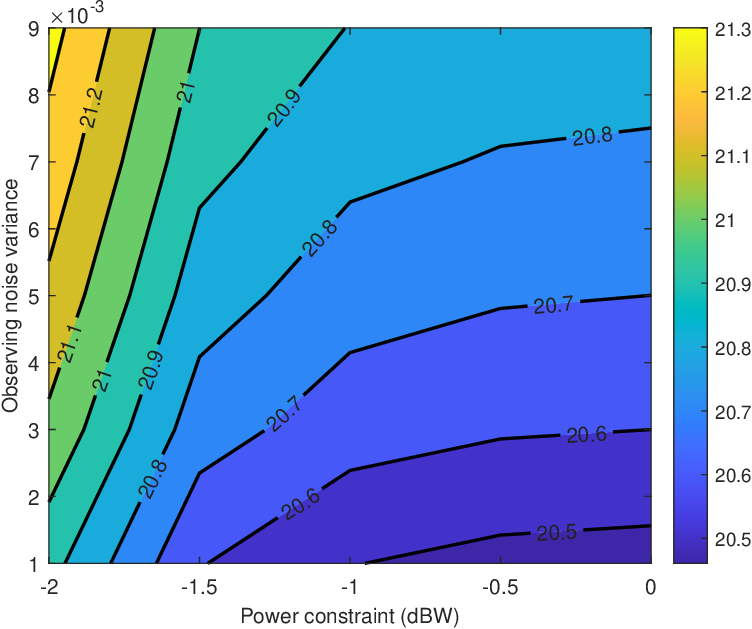}
\caption{\label{fig:LQR}LQR cost with different power constraints
and observation noise variances with $\eta=0.5$ and $l_{m,k}=1024\ \textrm{bit}$.}
\end{figure}

\begin{figure}
\centering
\includegraphics[width=0.9\linewidth]{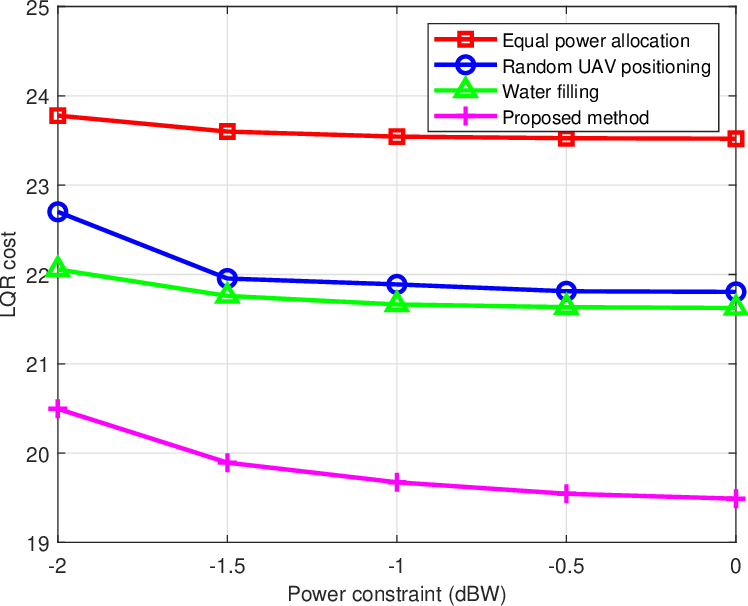}
\caption{\label{fig:LQRbm}LQR cost versus power constraint under different schemes with $\eta=0.5$ and $\boldsymbol{\Sigma}_{k}^{w}=0.001\times\mathbf{I}_{\zeta}$.}
\end{figure}

The convergence of the proposed algorithm is first analyzed. As illustrated in Fig. \ref{fig:Convergence}, both the LQR cost and the determinant of the FIM exhibit rapid monotonic changes, achieving optimal solutions within only a few iterations. This result indicates the convergence of the overall algorithm.

\textcolor{black}{Fig. \ref{fig:UAV-positions} illustrates the impact of the weighting factor $\eta$ on UAV positions and their association with robots. When $\eta=0.1$ (Fig. \ref{UPa}), UAVs are primarily positioned around the ST, focusing on sensing tasks, as indicated by the strong presence of sensing communication links. As $\eta$ increases to 0.5 (Fig. \ref{UPb}), UAVs gradually shift their positions toward the robots, balancing sensing and control tasks. When $\eta$ further increases to 0.9 (Fig. \ref{UPc}), UAVs are predominantly positioned near the robots, prioritizing control over sensing. However, UAVs dedicated to sensing remain close to the ST regardless of $\eta$, demonstrating that their role remains independent of the weighting factor. This shift in UAV positions highlights the trade-off between sensing and control performance, with UAVs adapting their locations based on the relative importance of these objectives.}


Next, we explore the impact of observation and communication capabilities on control system performance. Fig. \ref{fig:LQR} depicts the contours of the LQR cost with varying power budget $P_{\textrm{max}}$ and observation noise variances $\sigma_{w}^{2}$ ($\boldsymbol{\Sigma}_{k}^{w}=\sigma_{w,k}^{2}\times\mathbf{I}_{\zeta}$ and $\sigma_{w,k}^{2}=\sigma_{w}^{2},\forall k\in\mathcal{K}$). As seen from the picture, the LQR cost increases with higher observation noise variance, suggesting that control performance deteriorates with reduced observation capability. This degradation can be mitigated by enhancing communication capabilities, as the LQR cost decreases with an increasing power budget $P_{\textrm{max}}$. Moreover, when the power budget becomes sufficiently large, the LQR cost decreases slowly and converges to the lower bound $\left(b_{k}\right)_{\min}$, which can be further reduced by enhancing the observation capability. This result highlights the nonlinear interplay between control performance and communication capability, emphasizing that enhancing communication alone cannot fully counteract the adverse effects of diminished observation capability.

Based on Fig. \ref{fig:LQRbm}, the proposed algorithm demonstrates significant advantages over the other methods in terms of minimizing the LQR cost, which directly indicates better control performance. Compared to the equal power allocation and random UAV positioning schemes, our proposed approach achieves consistently lower LQR costs across all power constraint levels. Specifically, while the equal power allocation method shows a nearly constant and higher LQR cost due to the lack of optimization of power allocation, and the random UAV positioning method exhibits marginal improvements, the proposed algorithm effectively leverages power optimization and UAV positioning to significantly reduce the LQR cost. Additionally, compared to the water-filling method, which also performs better than the baseline schemes, the proposed method achieves superior results, particularly under stricter power constraints. This highlights the efficiency and robustness of the proposed method for optimizing both power allocation and UAV positioning for enhanced control performance.

Next, to examine the impact of FBL transmission on control performance, Fig. \ref{fig:FBL} illustrates the relationship between the LQR cost and the blocklength of the transmitted signals under various power budgets, i.e., $P_{\textrm{max}}=-3\ \textrm{dBW}$, $P_{\textrm{max}}=-2\ \textrm{dBW}$, $P_{\textrm{max}}=-1\ \textrm{dBW}$, and $P_{\textrm{max}}=-0\ \textrm{dBW}$. Note that a longer blocklength leads to higher data throughput, which, in turn, enhances communication capabilities and improves the control performance. Thus, we observe that the LQR cost decreases as the blocklength increases. Similar to Fig. \ref{fig:LQR}, one can find that the LQR cost remains constrained by a lower bound, even when the blocklength becomes sufficiently large. This limitation arises because, although a larger blocklength enhances communication capability, the resulting improvement in control performance is fundamentally limited and eventually converges toward a scenario resembling infinite blocklength transmission. 

\begin{figure}
\centering
\includegraphics[width=0.9\linewidth]{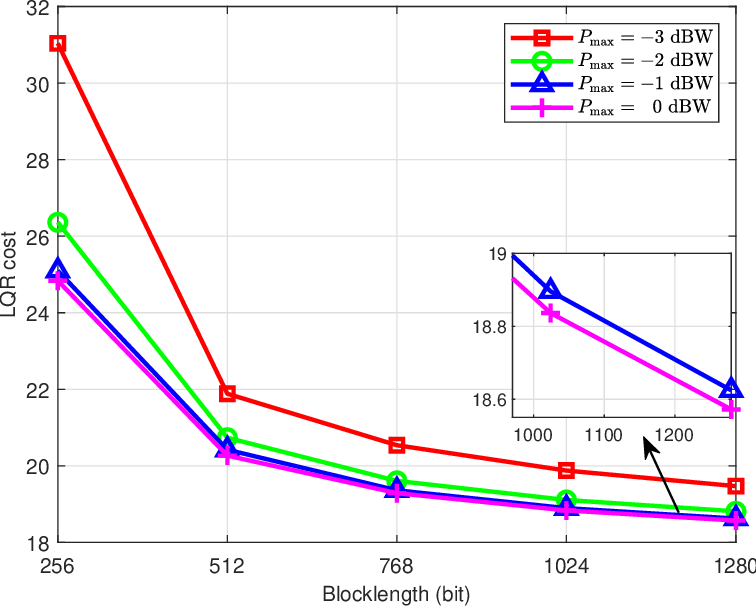}
\caption{\label{fig:FBL}LQR cost versus the blocklength of the ISAC signals with $\eta=0.5$ and $\boldsymbol{\Sigma}_{k}^{w}=0.001\times\mathbf{I}_{\zeta}$.}
\end{figure}

\textcolor{black}{Fig. \ref{fig:sensing} depicts the sensing performance versus power constraint. Rather than using the determinant of the FIM, we employ the sum CRB of $x^{s}$, $y^{s}$, and $z^{s}$ as the estimation metric, as it directly provides a lower bound for the root mean squared error (RMSE), making it a more intuitive measure of localization accuracy. The CRB is obtained by computing the trace of the inverse of the FIM, following standard estimation theory \cite{CRB}. The RMSE of the ST estimation is evaluated through a Monte Carlo simulation with $100$ trials. In addition, to highlight the advantages of our joint optimization algorithm, we compare the proposed algorithm with the sensing-only scheme, where the power budget is set to half of that used in the co-design scheme to ensure fairness. As expected, it is evident that the joint optimization scheme outperforms the sensing-only scheme in terms of sensing performance. Moreover, as shown in Fig. \ref{fig:sensing}, both CRB and RMSE decrease with an increase in power budget, which is because the rise in power contributes to better SNR of the echoes received by the UAV, enabling a greater localization performance for the ST. However, despite the improvement in sensing performance with an increasing power budget, the CRB and RMSE decrease slowly when the power budget becomes large. This occurs because once SNR reaches a sufficiently high level, the marginal impact of additional power increases diminishes, leading to a slower reduction in CRB and RMSE.}

\begin{figure}
\centering
\includegraphics[width=0.9\linewidth]{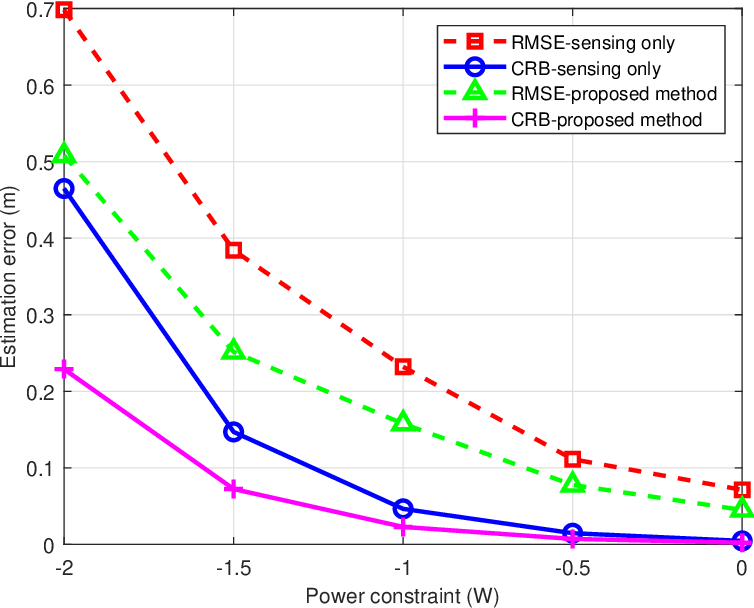}
\caption{\label{fig:sensing}Sensing performance versus power constraint with $\eta=0.5$, $\boldsymbol{\Sigma}_{k}^{w}=0.001\times\mathbf{I}_{\zeta}$, and $l_{m,k}=1024\ \textrm{\textrm{bit}}$.}
\end{figure}

To better understand the trade-off between sensing and control performance, Fig. \ref{fig:weight-factor} examines how the weighting factor $\eta$ quantitatively affects the performance metrics, i.e., the LQR cost for control and the CRB for sensing. It can be observed that there exists a clear trade-off between the LQR cost and the CRB. Specifically, as $\eta$ increases, the LQR cost gradually decreases, indicating improved control performance, while the CRB increases notably, reflecting a degradation in sensing accuracy. Moreover, Fig. \ref{fig:weight-factor} reveals that the CRB is more sensitive to changes in $\eta$, particularly at larger $\eta$ values, compared to the LQR cost, highlighting the asymmetry in their responses to the weighting factor. This demonstrates that the trade-off between control and sensing performance can be adjusted, but the degree of influence differs for the two metrics. 

\begin{figure}
    \centering
    \begin{subfigure}{0.9\linewidth}
        \centering
        \includegraphics[width=\linewidth]{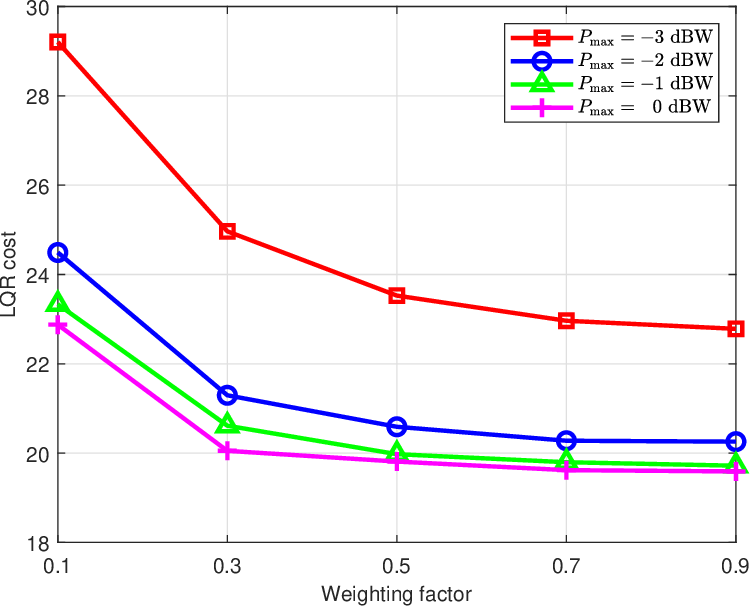}
        \caption*{(a)} 
    \end{subfigure}
    \begin{subfigure}{0.9\linewidth}
        \centering
        \includegraphics[width=\linewidth]{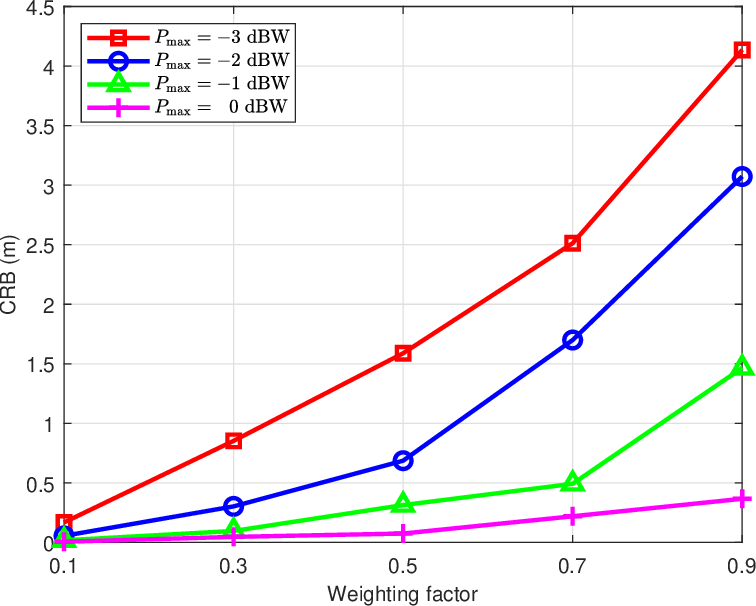}
        \caption*{(b)} 
    \end{subfigure}
    \caption{The trade-off between control and sensing versus weighting factor $\eta$ with $\boldsymbol{\Sigma}_{k}^{w}=0.001\times\mathbf{I}_{\zeta}$ and $l_{m,k}=1024\ \textrm{\textrm{bit}}$. (a) LQR cost versus weighting factor $\eta$. (b) CRB versus weighting factor $\eta$.}
    \label{fig:weight-factor}
\end{figure}

\section{Conclusion\label{sec:5}}

This work explored the co-design of integrated sensing, communication, and control in a multi-UAV cooperative system with FBL transmission. The LQR cost, power allocation, UAV-robot association, and UAV positions were jointly optimized to maximize sensing and control performance while considering rate-LQR cost bounds, power budget, collision avoidance, and flight boundary constraints. To tackle the problem, we first derived a closed-form expression of the optimal LQR cost concerning other variables. The problem was then decomposed into three sub-problems, which were solved alternately using the AO method. Specifically, the UAV-robot association was optimized via DC programming, with the convexity of the objective function analyzed. Next, the PGD method was employed to optimize power allocation. Subsequently, UAV positions were designed based on the updated LQR cost, power allocation, and UAV-robot association. Finally, an efficient AO-based algorithm was proposed, and its convergence properties and computational complexity were thoroughly analyzed. Simulation results were presented to demonstrate the effectiveness of the proposed optimization algorithm.

\bibliographystyle{IEEEtran}
\bibliography{a_plot_reference/reference}

\end{document}